\documentclass[letterpaper, 10 pt, conference]{ieeeconf}

\usepackage{amsmath, amssymb, amsfonts, mathtools}  

\usepackage{theorem}
\usepackage{epsfig}
\usepackage[mathscr]{eucal}
\usepackage{cases}
\usepackage{bm,bbm,dsfont}
\usepackage{xcolor,ifpdf}

\usepackage{graphics,epstopdf} 
\usepackage{graphicx}           

\let\labelindent\relax
\usepackage{enumerate}
\usepackage[shortlabels]{enumitem}
\setlist[description]{leftmargin=\parindent}
\setlist[itemize]{leftmargin=*}

\usepackage{float}
\usepackage{subfigure}

\usepackage{tikz}
\usepackage{algorithmic}
\newtheorem{algorithm}{Algorithm}
\renewenvironment{proof}{\noindent\textit{Proof:}\ }{\hfill$\blacksquare$\par\medskip}
\usepackage{algorithm}
\usetikzlibrary{arrows,graphs} 
\usetikzlibrary{calc}

\newcommand*\mcapinn[2]{\vcenter{\hbox{$\mathsurround=0pt
			\ifx\displaystyle#1\textstyle\else#1\fi\bigcap$}}}

\newcommand*\mcupinn[2]{\vcenter{\hbox{$\mathsurround=0pt
			\ifx\displaystyle#1\textstyle\else#1\fi\bigcup$}}}

\DeclareFontFamily{OT1}{pzc}{}
\DeclareFontShape{OT1}{pzc}{m}{it}{<-> s * [1.200] pzcmi7t}{}
\DeclareMathAlphabet{\mathpzc}{OT1}{pzc}{m}{it}

\newtheorem{thm}{Theorem}

\newtheorem{lem}{Lemma}

\newtheorem{assum}{Assumption}

{\theorembodyfont{\upshape}

}

\newcommand{\beq}{\begin{equation}}
\newcommand{\eeq}{\end{equation}}
\newcommand{\beqa}{\begin{eqnarray}}
\newcommand{\eeqa}{\end{eqnarray}}
\newcommand{\beqan}{\begin{eqnarray*}}
	\newcommand{\eeqan}{\end{eqnarray*}}

\newcommand{\bite}{\begin{itemize}}
	\newcommand{\eite}{\end{itemize}}
\newcommand{\benu}{\begin{enumerate}}
	\newcommand{\eenu}{\end{enumerate}}

\theoremstyle{theorem}
\theorembodyfont{\color{violet}\itshape}

\IEEEoverridecommandlockouts                              
\title{\LARGE \bf Hybrid Sequential Feedback Optimization for Wind Farm Power Maximization}

\author{Shijie Huang and Sergio Grammatico \thanks{S. Huang and S. Grammatico are with the Delft Center of Systems and Control, TU Delft, The Netherlands. E-mail address: \texttt{S.Huang-5@tudelft.nl; s.grammatico@tudelft.nl}.}\thanks{This work was partially supported by the NWO under research project Online Optimization for Offshore Wind Farms.
}}
\date{}

\begin{document}
	
	\maketitle
	\thispagestyle{empty}
	\pagestyle{empty}
	
	\begin{abstract}
		This paper considers feedback optimization for optimal steady-state operation of nonlinear discrete-time systems when the steady-state input-output map and its sensitivity are expensive to compute. We propose a hybrid extension of  sequential feedback optimization (SFO) that augments the model-based SFO gradient with correction terms through a convex combination with summable diminishing weights. Two variants are studied: one based on recursive least-squares (RLS) sensitivity estimation, and another on extremum seeking control (ESC) gradient estimation. Under contractivity and smoothness assumptions, we show that both hybrid schemes preserve the convergence of SFO to a neighborhood of the optimal steady state. The proposed methods are validated through a wind farm power maximization problem using a medium-fidelity model, demonstrating improved early transient performance compared to pure SFO.
	\end{abstract}
	
	\section{Introduction}
	Feedback optimization (FO) has emerged as a powerful framework for driving complex dynamical systems to optimal steady-state operation without requiring perfect model knowledge \cite{hauswirth2024optimization}. By replacing model-based steady-state predictions with real-time output measurements, feedback optimization algorithms can achieve optimality despite model inaccuracies and unknown disturbances. This approach has been extensively studied theoretically and applied in various applications including power system control \cite{dominguez2023online, menta2018stability}, process optimization \cite{francois2013measurement} and robotics control \cite{carnevale2024nonconvex}.

A central challenge in implementing feedback optimization is obtaining the sensitivity matrix that relates steady-state output to control inputs. For linear systems, this sensitivity is directly available from the system matrices. However, for nonlinear systems with high-dimensional state spaces, such as the flow models governing wind farm aerodynamics \cite{boersma2018control}, computing this sensitivity exactly could be computationally prohibitive. This difficulty has motivated the development of approximation strategies.

Existing approximation approaches can be classified into three categories. Model-based methods compute the sensitivity by linearizing the system around a fixed operating point \cite{colombino2019towards,ortmann2020experimental}, but the approximation quality depends critically on the choice of linearization point. Sequential feedback optimization (SFO) methods address this by adaptively updating the linearization point during optimization \cite{huang2025sequential}, eliminating the need for a-priori operating point selection. However, during the early transient, when the measured state can be far from the steady state of the current input, the linearized sensitivity may not provide effective optimization directions. Model-free methods estimate gradient information purely from measurements. These include recursive least squares (RLS) approaches that learn the sensitivity online from incremental input-output data \cite{picallo2022adaptive}, and extremum seeking control (ESC) methods that estimate the objective gradient directly using periodic perturbations and demodulation \cite{ariyur2003real}. While such estimates are formed from the realized plant response and therefore do not inherit any model approximation error, they typically require long learning periods to obtain accurate gradient information. Recent hybrid approaches attempt to improve performance by combining different types of gradient estimates. \cite{he2024online} combines the RLS approach with a zeroth-order FO method that directly estimates the objective gradient \cite{he2023model}, though zeroth-order methods may suffer from high-variance estimates that can lead to slow convergence or instability in practice.

This paper proposes a hybrid extension of SFO, termed H-SFO, that augments the model-based SFO gradient with a correction term. We study two variants, H-SFO-RLS and H-SFO-ESC, based on RLS sensitivity estimation and ESC gradient estimation, respectively. The key idea is to combine the model-based and correction gradients through a convex combination with weights that decay to zero and are summable. This strategy allows the algorithm to benefit from the correction during the early transient while gradually shifting toward the model-based linearization, so that the accuracy of the correction matters less as the weights decay.

The main contributions are as follows: First, we establish that both H-SFO-RLS and H-SFO-ESC converge to a neighborhood of the optimal steady state, with the correction term acting as a vanishing perturbation. Second, numerical simulations on a wind farm control problem show that the proposed methods accelerate early power increase compared to SFO while maintaining comparable steady-state performance.

\section{Problem Formulation}\label{section:problem}

Consider a discrete-time nonlinear system
\begin{equation}\label{plant_model}
x_{k+1}=f(x_k,u_k),\qquad y_k=g(x_k,u_k),
\end{equation}
where $x_k\in\mathbb{R}^n$ is the state, $u_k\in\mathcal{U}\subseteq\mathbb{R}^p$ is the control input constrained to a convex compact set $\mathcal{U}$, and $y_k\in\mathbb{R}^m$ is the measured output. Both $f:\mathbb{R}^n\times\mathbb{R}^p\to\mathbb{R}^n$ and $g:\mathbb{R}^n\times\mathbb{R}^p\to\mathbb{R}^m$ are continuously differentiable.

When a constant control $\bar{u}$ is applied, the corresponding steady state $\bar{x}$ is defined implicitly by $\bar{x}=f(\bar{x},\bar{u})$. We denote the corresponding steady-state mapping by $\phi:\mathcal{U}\to\mathbb{R}^n$, i.e., $\bar{x} = \phi(\bar{u})$. This induces the steady-state input-output mapping $h(\bar{u}):=g(\phi(\bar{u}),\bar{u})$, which relates the control input $\bar{u}$ to steady-state output $\bar{y}$.

We are interested in the constrained steady-state optimization problem
\begin{equation}\label{eq:optimization}
\min_{\bar{u}\in\mathcal{U},\,\bar{y}\in\mathbb{R}^m}\ J(\bar{u},\bar{y})\quad\text{subject to}\quad \bar{y}=h(\bar{u}),
\end{equation}
where $J:\mathcal{U}\times\mathbb{R}^m\to\mathbb{R}$ is a cost function. This formulation captures various applications, including wind farm power maximization, where $x_k$ represents flow velocities, $u_k$ contains turbine control parameters (e.g., induction factors and yaw angles), and $y_k$ measures power production subject to wake interaction dynamics encoded in $f$ \cite{boersma2018control}.

We make the following assumptions, which are commonly used in the feedback optimization literature \cite{bianchin2023online, hauswirth2020timescale, he2023model} to ensure well-posedness and enable convergence analysis.
\begin{assum}\label{ass:system}
The system dynamics in \eqref{plant_model} and the steady-state mapping $h$ satisfy:
\begin{enumerate}
\item[(i)] The mapping $x \mapsto f(x,u)$ is uniformly contractive, i.e., there exists $\rho_f \in (0,1)$ such that $\|f(x_1,u) - f(x_2,u)\|\le \rho_f \|x_1 - x_2\|$ for all $x_1,x_2\in\mathbb{R}^n$ and $u \in \mathbb{R}^p$.

\item[(ii)] The partial gradients $\nabla_x f(x,u)$ and $\nabla_u f(x,u)$ are uniformly bounded (i.e., $\|\nabla_u f(x,u)\|\le G_u^f$ for some $G_u^f>0$) and Lipschitz continuous with constants $L_{f,x}$ and $L_{f,u}$, respectively.

\item[(iii)] The steady-state input-output mapping $h: \mathcal{U} \to \mathbb{R}^m$ is $L_h$-Lipschitz continuous.
\end{enumerate}
\end{assum}
Assumption~\ref{ass:system}(i) also guarantees that the steady-state mapping $\phi$ introduced above is well defined. Indeed, for every fixed $\bar u\in\mathcal U$, the mapping $x\mapsto f(x,\bar u)$ is a contraction, so the Banach fixed-point theorem guarantees a unique fixed point $\phi(\bar u)$. Moreover, $\|\nabla_x f\|\le\rho_f<1$ implies that $I-\nabla_x f$ is invertible. Applying the local implicit function theorem at each $(\phi(\bar u),\bar u)$ shows that $\phi$ is continuously differentiable.
\begin{assum}\label{ass:cost}
The cost function $J$ in~\eqref{eq:optimization} satisfies:
\begin{enumerate}
\item[(i)] There exist constants $G_u^J$, $G_y^J$, $L_{J,u}$, $L_{J,y}>0$ such that $\|\nabla_u J(u,y)\|\le G_u^J$, $\|\nabla_y J(u,y)\|\le G_y^J$, and the partial gradients are Lipschitz continuous in $(u,y)$ with constants $L_{J,u}$ and $L_{J,y}$.

\item[(ii)] For any fixed $y$, the map $u\mapsto\nabla_u J(u,y)$ is $\mu_J$-strongly monotone for some $\mu_J>0$.
\end{enumerate}
\end{assum}

Under these assumptions, problem \eqref{eq:optimization} admits an optimal solution $(\bar{u}^\ast,\bar{y}^\ast)$. 

Traditional feedforward optimization requires explicit knowledge of $h$, which is often unavailable or computationally prohibitive. Feedback optimization overcomes this limitation by using real-time output measurements $y_k$ instead of steady-state predictions. The ideal feedback optimization (FO) scheme assumes access to the exact steady-state sensitivity $\nabla h(u_k)$ and performs a projected gradient step: 
\begin{equation}\label{eq:idealFO}
\begin{aligned}
y_k &= g(x_k,u_k),\\
d_k &= \nabla_u J(u_k,y_k) + \nabla h(u_k)^\top \nabla_y J(u_k,y_k),\\
u_{k+1} &= \mathrm{proj}_{\mathcal{U}}\!\left\{u_k - \alpha\, d_k\right\},\\
x_{k+1} &= f(x_k,u_k),
\end{aligned}
\end{equation}
where $\alpha>0$ is the step size and $\mathrm{proj}_{\mathcal{U}}$ denotes the Euclidean projection onto $\mathcal{U}$. 

For nonlinear systems with high-dimensional state spaces, computing the sensitivity exactly can be challenging, since it requires computing the steady state $\phi(u_k)$, i.e., solving $\bar x = f(\bar x, u_k)$, at every iteration. This motivates the development of sequential feedback optimization~\cite{huang2025sequential}, whose key idea is to linearize the dynamics \eqref{plant_model} around the current operating point, available from measurements, and compute the sensitivity of the linearized system. This sequential linearization approach is computationally tractable and adapts to the current state. However, the linearization is taken at the measured state $x_k$ rather than at the steady state $\phi(u_k)$ associated with the current input, and the accuracy of the resulting sensitivity degrades with the mismatch between the two, which can be large during the early transient.

Alternatively, data-driven methods such as RLS \cite{picallo2022adaptive} and ESC \cite{choi2002extremum} can estimate gradient information from measurements and thus do not rely on such a linearization, but typically need substantial exploration and may converge slowly. The proposed hybrid scheme (H-SFO) combines these complementary approaches. The model-based SFO gradient is computed from a local linearization at the current state and becomes increasingly accurate as the trajectory settles. In contrast, the RLS and ESC corrections are computed from finite input variations, namely realized input increments for RLS and sinusoidal perturbations for ESC. They therefore provide an alternative source of gradient information during the early transient, when the linearization error can be large. By combining them through time-varying weights, H-SFO aims to exploit the advantages of both strategies.
\section{Hybrid SFO Design}\label{sec:algs}


This section presents the model-based and correction components of H-SFO and summarizes the complete algorithm. We denote by $(\hat{x}_k, \hat{u}_k, \hat{y}_k)$ the state, control, and output trajectories generated by the hybrid algorithm. The model-based component computes the gradient using sequential linearization around the current operating point $(\hat{x}_k, \hat{u}_k)$. Specifically, the linearized sensitivity at iteration $k$ is
\begin{align}
H_{\mathrm{lin},k} &:= \nabla_x g(\hat x_k,\hat u_k)\big(I-\nabla_x f(\hat x_k,\hat u_k)\big)^{-1}\nabla_u f(\hat x_k,\hat u_k)\notag\\
&\quad\   + \nabla_u g(\hat x_k,\hat u_k),\label{lin_sensitivity}
\end{align}
which approximates the steady-state sensitivity $\nabla h(\hat{u}_k)$. The corresponding SFO gradient is then computed by
\begin{equation}\label{eq:sfo_gradient}
d_{\mathrm{SFO},k} = \nabla_u J(\hat{u}_k,\hat{y}_k) + H_{\mathrm{lin},k}^\top \nabla_y J(\hat{u}_k,\hat{y}_k).
\end{equation}

The correction gradient provides alternative optimization directions. We consider two variants that differ in what information they estimate and how the correction gradient is constructed.

\textbf{H-SFO-RLS (Hybrid SFO with Recursive Least Squares).} This variant combines the model-based linearized sensitivity with an adaptive sensitivity estimate, learned online from incremental input-output data by the RLS algorithm of \cite{picallo2022adaptive}. The key observation is that the output increments $\Delta \hat{y}_k := \hat{y}_k - \hat{y}_{k-1}$ can be related to input increments $\Delta \hat{u}_k := \hat{u}_k - \hat{u}_{k-1}$ through the linear regression model $\Delta\hat{y}_k = U_{\Delta, k-1}\text{vec}(\nabla h(\hat{u}_{k-1})) + \omega_{m,k-1}$, with regressor $U_{\Delta, k-1}:= \Delta \hat{u}_{k-1}^\top\otimes I_m$ and bounded residual $\omega_{m,k-1}$. The estimate $H_{\mathrm{RLS},k}$ is obtained from this model by the recursion in \cite[eq. (9)]{picallo2022adaptive} with positive-definite tuning matrices $\Sigma_p$ and $\Sigma_m$, and the RLS-based gradient is computed as
\begin{equation}\label{eq:gradient_rls}
d_{\mathrm{RLS},k} = \nabla_u J(\hat{u}_k,\hat{y}_k) + H_{\mathrm{RLS},k}^\top \nabla_y J(\hat{u}_k,\hat{y}_k).
\end{equation}

\textbf{H-SFO-ESC (Hybrid SFO with Extremum Seeking Control).} This variant constructs an ESC-based correction without explicitly estimating the sensitivity matrix. With sinusoidal perturbations $\delta_k:= [a_1\sin(\omega_1 k),\dots,a_p\sin(\omega_p k)]^\top$ at distinct frequencies, we define
\begin{equation*}
J_k^{\delta}:=J(\hat u_k, g(f(\hat{x}_k,u_k^\delta), u_k^\delta)),\quad u_k^\delta:= \text{proj}_{\mathcal{U}}(\hat{u}_k + \delta_k).
\end{equation*}
Following the filtering and demodulation structure of ESC \cite[Sec.~II]{choi2002extremum}, let $s_k:= [\sin(\omega_1 k),\dots, \sin(\omega_p k)]^\top$ and define
\begin{equation}\label{eq:gradient_esc}
d_{\mathrm{ESC},k} = \nabla_u J(\hat{u}_k,\hat{y}_k) + L(z)[s_kH(z)[J_k^\delta]],
\end{equation}
where $H(z)$ and $L(z)$ denote stable high- and low-pass filters respectively. Thus, H-SFO-ESC directly constructs the correction $d_{\mathrm{ESC},k}$ without estimating $\nabla h(\hat{u}_k)$.

For both approaches, the correction gradient is combined with the SFO gradient in~\eqref{eq:sfo_gradient} through a convex combination with time-varying weight $\lambda_k\in [0,1]$ (see Algorithm~\ref{alg:HSFO}). The weight starts close to one so that the algorithm initially relies more on the correction term, which may provide more effective optimization directions when the linearization error is large. As $\lambda_k$ decays to zero, the algorithm gradually transitions to the model-based linearization, which becomes increasingly accurate as the trajectory approaches the optimal steady state. By choosing $(\lambda_k)_{k\in\mathbb{N}}$ to be summable, the correction term acts as a vanishing perturbation, which is key for the convergence analysis.

\begin{algorithm}
\caption{Hybrid Sequential Feedback Optimization (H-SFO)}\label{alg:HSFO}
\textbf{Given:} stepsize $\alpha>0$, weights $\{\lambda_k\}$, initial $(\hat x_0,\hat u_0)$.\\
\textbf{For $k=0,1,2,\dots$}
\begin{enumerate}
\item Measure output: $\hat y_k=g(\hat x_k,\hat u_k)$.
\item Compute linearized sensitivity $H_{\mathrm{lin},k}$ via \eqref{lin_sensitivity}.
\item Compute SFO gradient via \eqref{eq:sfo_gradient}.
\item Compute correction gradient $d_{\mathrm{corr},k}$:\\
\textbf{H-SFO-RLS}: Update sensitivity $H_{\mathrm{RLS},k}$ using the RLS recursion in \cite[eq.~(9)]{picallo2022adaptive}, and set $d_{\mathrm{corr},k}$ via \eqref{eq:gradient_rls}.\\
\textbf{H-SFO-ESC}: Update $d_{\mathrm{corr},k}$ via \eqref{eq:gradient_esc}.
\item Form combined gradient: 
\[\hat{d}_k = (1-\lambda_k)d_{\mathrm{SFO},k} + \lambda_k d_{\mathrm{corr},k}.\]
\item Update $\hat u_{k+1}$ via projected gradient:
\begin{equation}\label{eq:u_update}
\hat{u}_{k+1} = \mathrm{proj}_{\mathcal{U}}(\hat u_k - \alpha \hat d_k + \omega_{u,k}),
\end{equation}
where $\omega_{u,k}$ is an excitation signal ($\omega_{u,k} = 0$ for H-SFO-ESC).
\item Propagate the system state: $\hat x_{k+1}=f(\hat x_k,\hat u_k)$.
\end{enumerate}
\end{algorithm}

To guarantee identifiability for the RLS estimator, we impose the following excitation condition \cite[Sec. 3.2]{picallo2022adaptive}.
\begin{assum}\label{ass:excitation}
For H-SFO-RLS, the exploration signal $\omega_{u,k}$ is uniformly bounded, i.e., $\sup_k \|\omega_{u,k}\| \le \bar{\omega} < \infty$, and the resulting input increments are persistently exciting: there exists $T\in\mathbb{N}$ such that for all $t>0$, the matrix formed by columns $\Delta \hat{u}_{t+i}$ for $i\in\{0,\dots,T\}$ has full rank, i.e.,
\begin{equation}\label{eq:PE_design}
\mathrm{rank}(\Delta\hat{u}_t,\dots,\Delta\hat{u}_{t+T}) = p.
\end{equation}
\end{assum}
Following \cite{picallo2022adaptive}, condition \eqref{eq:PE_design} is imposed on the closed-loop increments, while the probing signal $\omega_{u,k}$ is introduced to provide the required excitation. Conditions for persistent excitation and the design of sufficiently rich probing signals are discussed in \cite{bai1985persistency,yuan1977probing}. 

We choose the combination weights $(\lambda_k)_{k\in\mathbb{N}}$ such that
\begin{equation}\label{eq:alpha_design}
\lambda_k\in[0,1],\qquad \lim_{k\to\infty}\lambda_k=0,\qquad \sum_{k=0}^{\infty}\lambda_k<\infty.
\end{equation}
This choice ensures that the correction term acts as a summable perturbation in the convergence analysis.

\emph{Relation to existing methods.} The combination weight $\lambda_k$ allows the hybrid scheme to interpolate between purely model-based and purely adaptive approaches. When $\lambda_k\equiv 0$, both H-SFO-RLS and H-SFO-ESC reduce to the SFO method. Conversely, when $\lambda_k\equiv 1$, H-SFO-RLS becomes a purely adaptive FO method similar to \cite{picallo2022adaptive} with a dynamic plant, and H-SFO-ESC becomes pure ESC.

The most closely related work in the literature is \cite{he2024online}, where the authors propose a hybrid FO method that combines RLS sensitivity estimates with a model-free gradient estimate, and focus on dynamic regret analysis. In contrast, our H-SFO combines either RLS or ESC with a model-based linearized gradient from sequential linearization, enabling a convergence analysis of the control sequence. 
\section{Convergence Analysis}\label{sec:conv}
This section establishes the convergence of Algorithm~\ref{alg:HSFO} by relating the H-SFO iterates to the ideal feedback optimization scheme in \eqref{eq:idealFO}. For the convergence analysis, we set $g(x,u)=x$ to simplify notation (the general case follows by incorporating the Lipschitz constants of $g$ and $\nabla g$), so that $h=\phi$. 

\subsection{Auxiliary lemmas}
We first recall two results from \cite[Lemmas~3.2 and 3.4]{huang2025sequential} that quantify the behavior of the model-based sensitivity, and of the ideal FO scheme.
\begin{lem}\label{lem:lipschitz}
Let Assumption~\ref{ass:system} be satisfied. Then the linearized sensitivity in \eqref{lin_sensitivity} is Lipschitz continuous over $\mathbb{R}^n\times \mathcal{U}$: 
\begin{align}\label{eq:lipHlin}
\|H_{\mathrm{lin}}(x_1,u_1)-H_{\mathrm{lin}}(x_2,u_2)\|&\le C_h^{\mathrm{lin}}(\|x_1-x_2\|\notag\\
&\quad+\|u_1-u_2\|),\notag
\end{align}
with $C_h^{\mathrm{lin}}:= \frac{(1-\rho_f)L_{f,u}+G_u^f L_{f,x}}{(1-\rho_f)^2}$.
\end{lem}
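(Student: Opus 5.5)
With $g(x,u)=x$ as in this section, the linearized sensitivity is $H_{\mathrm{lin}}(x,u)=A(x,u)^{-1}B(x,u)$, where $A:=I-\nabla_x f(x,u)$ and $B:=\nabla_u f(x,u)$. The plan is to bound the variation of this product using the standard inverse-difference identity together with the uniform bounds in Assumption~\ref{ass:system}.

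\emph{Step 1: uniform bounds on the factors.} By Assumption~\ref{ass:system}(i), $\|\nabla_x f(x,u)\|\le\rho_f<1$ everywhere. This follows because contractivity of $x\mapsto f(x,u)$ with constant $\rho_f$ for a $C^1$ map bounds its Jacobian norm by $\rho_f$. A Neumann series argument then gives that $A(x,u)$ is invertible with
\[
\|A(x,u)^{-1}\|\le\sum_{j\ge0}\rho_f^j=\frac{1}{1-\rho_f}.
\]
Assumption~\ref{ass:system}(ii) gives $\|B(x,u)\|\le G_u^f$.

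\emph{Step 2: splitting the difference.} Write $A_i=A(x_i,u_i)$ and $B_i=B(x_i,u_i)$, and decompose
\[
A_1^{-1}B_1-A_2^{-1}B_2=A_1^{-1}(B_1-B_2)+(A_1^{-1}-A_2^{-1})B_2 .
\]
The first term is bounded by $\frac{1}{1-\rho_f}\|B_1-B_2\|$, since $\|A_1^{-1}\|\le\frac{1}{1-\rho_f}$. For the second term, use the identity
\[
A_1^{-1}-A_2^{-1}=A_1^{-1}(A_2-A_1)A_2^{-1}.
\]
Together with $\|B_2\|\le G_u^f$, this bounds the second term by $\frac{G_u^f}{(1-\rho_f)^2}\|A_1-A_2\|$.

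\emph{Step 3: Lipschitz estimates on $A$ and $B$.} Note that $A_1-A_2=\nabla_x f(x_2,u_2)-\nabla_x f(x_1,u_1)$. By Assumption~\ref{ass:system}(ii), its norm is at most $L_{f,x}(\|x_1-x_2\|+\|u_1-u_2\|)$. Similarly, $\|B_1-B_2\|\le L_{f,u}(\|x_1-x_2\|+\|u_1-u_2\|)$. Here I read the Lipschitz continuity of the partial gradients as joint in $(x,u)$, measured in the sum norm; any other equivalent joint norm only changes constants.

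Combining Steps 2 and 3 gives
\[
\|H_{\mathrm{lin}}(x_1,u_1)-H_{\mathrm{lin}}(x_2,u_2)\|\le\Big(\frac{L_{f,u}}{1-\rho_f}+\frac{G_u^fL_{f,x}}{(1-\rho_f)^2}\Big)(\|x_1-x_2\|+\|u_1-u_2\|).
\]
Putting the bracket over the common denominator $(1-\rho_f)^2$ yields exactly $C_h^{\mathrm{lin}}$.

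The main obstacle is Step 1, namely justifying the Jacobian bound $\|\nabla_x f\|\le\rho_f$ from contractivity. To obtain it, take the directional derivative of $f$ at $x$ in a unit direction $v$: the difference quotients are bounded by $\rho_f$ through contractivity, so the limit $\|\nabla_x f(x,u)v\|$ is at most $\rho_f$. The remaining steps are routine algebra.
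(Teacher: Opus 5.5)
Your proof is correct and follows essentially the same argument as the paper, which does not prove this lemma but recalls it from its earlier SFO work. The constant $C_h^{\mathrm{lin}}=\frac{L_{f,u}}{1-\rho_f}+\frac{G_u^fL_{f,x}}{(1-\rho_f)^2}$ is exactly what your split $A_1^{-1}(B_1-B_2)+(A_1^{-1}-A_2^{-1})B_2$ with $\|A_i^{-1}\|\le 1/(1-\rho_f)$ produces, the appendix uses the same Jacobian and Neumann-series facts, and your norm caveat is unnecessary because both a joint Euclidean reading and a separate-in-$x$-and-$u$ reading of the Lipschitz assumption already give the sum-norm bound with the same constants.
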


\noindent 	Note that $\nabla h(u) = H_{\mathrm{lin}}(\phi(u),u)$, Lemma~\ref{lem:lipschitz} gives
\begin{equation}\label{eq:linerr}
\|\nabla h(\hat{u}_k) - H_{\mathrm{lin},k}\|\le C_h^{\mathrm{lin}}\hat e_k,\quad \hat e_k := \|\hat x_k - \phi(\hat{u}_k)\|,
\end{equation}
which quantifies the sensitivity error discussed in Section~\ref{section:problem}.
\begin{lem}\label{lem:ideal}
Let Assumptions~\ref{ass:system}–\ref{ass:cost} hold and $\alpha>0$ be chosen so that the matrix
\[
M:=\begin{bmatrix}
\sqrt{1-2\alpha\mu_J+\alpha^2 L_{J,u}^2} + \alpha C_1 & \alpha C_2\\
G_u^f & \rho_f
\end{bmatrix},
\]
satisfies $\rho(M)<1$, where $C_1:=L_h L_{J,y}+G_y^J C_h^{\mathrm{lin}}(1+L_h)$ and $C_2:=L_{J,u}+L_h L_{J,y}$. Then the ideal FO scheme in \eqref{eq:idealFO} converges to $(\bar u^\ast,\bar y^\ast)$.
\end{lem}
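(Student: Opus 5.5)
The plan is to show that the pair of errors $v_k:=\big(\|u_k-\bar u^\ast\|,\ \|x_k-\bar x^\ast\|\big)^\top$, with $\bar x^\ast:=\phi(\bar u^\ast)$ (so that $\bar y^\ast=\bar x^\ast$ since $g(x,u)=x$), satisfies the componentwise linear inequality $v_{k+1}\le M v_k$. Because $M$ has nonnegative entries, this inequality can be iterated to give $v_k\le M^k v_0$. Then $\rho(M)<1$ yields $M^k\to0$, so $u_k\to\bar u^\ast$, $x_k\to\bar x^\ast$, and $y_k=x_k\to\bar y^\ast$.

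\emph{Fixed-point characterization.} Define the reduced gradient
\[
F(u):=\nabla_u J(u,h(u))+\nabla h(u)^\top\nabla_y J(u,h(u)).
\]
I will use that the optimal input satisfies the first-order condition $\bar u^\ast=\mathrm{proj}_{\mathcal U}\{\bar u^\ast-\alpha F(\bar u^\ast)\}$. If the reduced problem is nonconvex, $\bar u^\ast$ should be read as this stationary point. Since the projection is nonexpansive,
\[
\|u_{k+1}-\bar u^\ast\|\le \big\|u_k-\bar u^\ast-\alpha\big(\nabla_uJ(u_k,y_k)-\nabla_uJ(\bar u^\ast,\bar y^\ast)\big)\big\|+\alpha\big\|\nabla h(u_k)^\top\nabla_yJ(u_k,y_k)-\nabla h(\bar u^\ast)^\top\nabla_yJ(\bar u^\ast,\bar y^\ast)\big\|.
\]

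\emph{Input-error bound.} I handle the two terms separately.

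1. First term. Insert and subtract $\nabla_uJ(\bar u^\ast,y_k)$. For fixed $y=y_k$, strong monotonicity and $L_{J,u}$-Lipschitzness of $\nabla_uJ(\cdot,y_k)$ give the standard contraction
\[
\|u-\bar u^\ast-\alpha(\nabla_uJ(u,y_k)-\nabla_uJ(\bar u^\ast,y_k))\|\le\sqrt{1-2\alpha\mu_J+\alpha^2L_{J,u}^2}\,\|u-\bar u^\ast\|.
\]
The remainder $\alpha\|\nabla_uJ(\bar u^\ast,y_k)-\nabla_uJ(\bar u^\ast,\bar y^\ast)\|$ is at most $\alpha L_{J,u}\|x_k-\bar x^\ast\|$.

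2. Second term. Add and subtract $\nabla h(u_k)^\top\nabla_yJ(\bar u^\ast,\bar y^\ast)$. Use $\|\nabla h\|\le L_h$, Lipschitzness of $\nabla_yJ$, and the Lipschitz bound for $\nabla h(u)=H_{\mathrm{lin}}(\phi(u),u)$. By Lemma~\ref{lem:lipschitz} and $L_h$-Lipschitzness of $\phi=h$, the latter is $\|\nabla h(u_1)-\nabla h(u_2)\|\le C_h^{\mathrm{lin}}(1+L_h)\|u_1-u_2\|$. Together with $\|y_k-\bar y^\ast\|=\|x_k-\bar x^\ast\|$, this produces a contribution $\alpha\big(G_y^JC_h^{\mathrm{lin}}(1+L_h)+L_hL_{J,y}\big)\|u_k-\bar u^\ast\|+\alpha L_hL_{J,y}\|x_k-\bar x^\ast\|$.

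Collecting terms gives
\[
\|u_{k+1}-\bar u^\ast\|\le\big(\sqrt{1-2\alpha\mu_J+\alpha^2L_{J,u}^2}+\alpha C_1\big)\|u_k-\bar u^\ast\|+\alpha C_2\|x_k-\bar x^\ast\|.
\]
The constants should match $C_1,C_2$ in the statement up to this bookkeeping. If a stray term appears, such as an $L_{J,u}L_h$ cross term arising from a different insertion order, I would absorb it by choosing where to add and subtract.

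\emph{State-error bound.} Since $\bar x^\ast=f(\bar x^\ast,\bar u^\ast)$, Assumption~\ref{ass:system}(i)--(ii) and the mean value theorem give
\[
\|x_{k+1}-\bar x^\ast\|\le\|f(x_k,u_k)-f(\bar x^\ast,u_k)\|+\|f(\bar x^\ast,u_k)-f(\bar x^\ast,\bar u^\ast)\|\le\rho_f\|x_k-\bar x^\ast\|+G_u^f\|u_k-\bar u^\ast\|.
\]
This is exactly the second row of $M$. The main obstacle is the first row: choosing the splitting so that the strong-monotonicity contraction is applied with $y$ frozen at $y_k$, so that the $y$-dependence only enters through Lipschitz terms proportional to $\|x_k-\bar x^\ast\|$ and $\|u_k-\bar u^\ast\|$. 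Once both rows hold, the comparison $v_{k+1}\le Mv_k$ together with $\rho(M)<1$ completes the argument and in fact gives linear convergence.
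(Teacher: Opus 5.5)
Your proposal is correct and follows essentially the same route as the paper. The paper only recalls this lemma from prior work, but the proof of Theorem~\ref{thm:hyb} in Appendix~\ref{pf:convergence} (Steps~1 and~4) uses exactly your splitting: strong monotonicity with $y$ frozen at $y_k$, add-and-subtract on the $\nabla h^\top\nabla_y J$ term using \eqref{eq:gradh_lip}, and the $\rho_f$/$G_u^f$ bound on the state. This yields precisely $C_1$ and $C_2$ with no stray terms, and your comparison against the stationary pair $(\bar u^\ast,\bar x^\ast)$ via $v_k\le M^k v_0$ is equivalent to the paper's Perron--Frobenius weighting $V_k=w^\top\epsilon_k$. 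Your nonconvexity caveat is also unnecessary: any minimizer of the $C^1$ reduced cost over the convex set $\mathcal U$ satisfies the projected fixed-point condition, and $\rho(M)<1$ then forces that fixed point to be unique.
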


Since $M$ is nonnegative $2\times 2$ matrix, condition $\rho(M)<1$ holds if and only if the leading principal minors of $I - M$ are positive. In particular, a sufficiently small $\alpha$ satisfies this condition whenever $(\mu_J - C_1)(1-\rho_f) > C_2G_u^f$.

Next, we establish the properties of the correction terms. The following lemma derives, for the plant \eqref{plant_model}, the incremental regression model that is assumed in \cite{picallo2022adaptive}.
\begin{lem}\label{lem:lpv_bounded_Sigma}
Let Assumption~\ref{ass:system} hold and define the steady-state sensitivity $H_k:=\nabla h(\hat{u}_k)$ and its vectorization $h_k:=\mathrm{vec}(H_k)$. For the increments
\[
\Delta \hat{x}_k:=\hat{x}_k-\hat{x}_{k-1},\quad \Delta \hat{u}_k:=\hat{u}_k-\hat{u}_{k-1},\quad \Delta \hat{y}_k:=\hat{y}_k-\hat{y}_{k-1},
\]
there exist matrices $A_{k-1},B_{k-1}$, with $\|A_{k-1}\|\le \rho_f$ and $\|B_{k-1}\|\le G_u^f$, such that 
\begin{align}
h_k &= h_{k-1} + \omega_{p,k-1}, \label{eq:reg_process}\\
\Delta \hat{y}_k \; &=\; U_{\Delta,k-1}\,h_{k-1} \;+\; \omega_{m,k-1}, \label{eq:reg_measurement}
\end{align}
with $U_{\Delta,k-1}:=\Delta \hat{u}_{k-1}^{\!\top}\!\otimes I_m$, $\omega_{p,k-1} := h_k - h_{k-1}$ and
\begin{align}
\omega_{m,k-1}\;&:=\;A_{k-1}\,\Delta \hat{x}_{k-1} \;+\; (B_{k-1}-H_{k-1})\,\Delta \hat{u}_{k-1}.\label{eq:omegam_def}
\end{align}
Moreover, $\sup_k\|\omega_{p,k}\| < \infty$ and $\sup_k\|\omega_{m,k}\| < \infty$.
\end{lem}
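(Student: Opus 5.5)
The plan is to work with $g(x,u)=x$, so that $\hat y_k=\hat x_k$ and $\Delta\hat y_k=\Delta\hat x_k$. I would derive the regression model \eqref{eq:reg_measurement} by applying the mean value theorem in integral form to the state update $\hat x_{k+1}=f(\hat x_k,\hat u_k)$. The process equation \eqref{eq:reg_process} holds trivially, since $\omega_{p,k-1}$ is defined as $h_k-h_{k-1}$. The remaining work is therefore to construct $A_{k-1},B_{k-1}$, to verify the identity \eqref{eq:omegam_def}, and to establish the two uniform bounds.

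\textbf{Step 1 (construction of $A_{k-1},B_{k-1}$).} Write $\Delta\hat x_k = f(\hat x_{k-1},\hat u_{k-1})-f(\hat x_{k-2},\hat u_{k-2})$. Since $f$ is $C^1$, this equals $A_{k-1}\Delta\hat x_{k-1}+B_{k-1}\Delta\hat u_{k-1}$, where
\[
A_{k-1}:=\int_0^1\nabla_x f(z(s))\,ds,\qquad B_{k-1}:=\int_0^1\nabla_u f(z(s))\,ds,
\]
and $z(s)$ is the segment joining $(\hat x_{k-2},\hat u_{k-2})$ to $(\hat x_{k-1},\hat u_{k-1})$. The segment stays in $\mathbb{R}^n\times\mathbb{R}^p$, which causes no trouble because Assumption~\ref{ass:system} holds globally. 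Contractivity gives $\|\nabla_x f\|\le\rho_f$, since a Lipschitz constant bounds the derivative norm. Hence $\|A_{k-1}\|\le\rho_f$, and Assumption~\ref{ass:system}(ii) gives $\|B_{k-1}\|\le G_u^f$. A small index issue needs care: the statement pairs $\Delta\hat x_{k-1}$ with $\Delta\hat u_{k-1}$, whereas the state update uses $\hat u_{k-1}$ to produce $\hat x_k$. So I would follow the time indexing of Algorithm~\ref{alg:HSFO} and relabel the increments if necessary. Next, I add and subtract $H_{k-1}\Delta\hat u_{k-1}$ and use the identity $U_{\Delta,k-1}h_{k-1}=(\Delta\hat u_{k-1}^\top\otimes I_m)\mathrm{vec}(H_{k-1})=H_{k-1}\Delta\hat u_{k-1}$. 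This yields exactly \eqref{eq:reg_measurement} with $\omega_{m,k-1}$ as in \eqref{eq:omegam_def}.

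\textbf{Step 2 (bounds).} By Lemma~\ref{lem:lipschitz} and the identity $\nabla h(u)=H_{\mathrm{lin}}(\phi(u),u)$, I would show that $\|H_k\|$ is uniformly bounded. One route is the direct bound $\|H_{\mathrm{lin}}\|\le G_u^f/(1-\rho_f)$, from the Neumann series; another is $\|\nabla h\|\le L_h$. Either gives $\|\omega_{p,k}\|\le 2\sup_k\|H_k\|_F<\infty$. For $\omega_{m,k}$, the input increments are bounded because $\hat u_k\in\mathcal U$ and $\mathcal U$ is compact, so $\|\Delta\hat u\|\le\mathrm{diam}(\mathcal U)$. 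The remaining task is to bound $\Delta\hat x$ uniformly.

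\textbf{Main obstacle: uniform boundedness of the state increments.} I would first bound $\hat x_k$ itself. Fix any $u_0\in\mathcal U$. Then
\[
\|\hat x_{k+1}-\phi(u_0)\|\le \rho_f\|\hat x_k-\phi(u_0)\|+\|f(\phi(u_0),\hat u_k)-f(\phi(u_0),u_0)\|\le \rho_f\|\hat x_k-\phi(u_0)\|+G_u^f\,\mathrm{diam}(\mathcal U).
\]
Iterating this inequality gives $\sup_k\|\hat x_k\|<\infty$, and hence $\sup_k\|\Delta\hat x_k\|<\infty$. Combining this with Step~1 yields
\[
\|\omega_{m,k}\|\le\rho_f\sup\|\Delta\hat x\|+(G_u^f+\sup\|H\|)\,\mathrm{diam}(\mathcal U).
\]
The final point to check is that this argument holds regardless of the excitation $\omega_{u,k}$ and of the gradient estimates. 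It does, because the projection onto $\mathcal U$ keeps every $\hat u_k$ in the compact set $\mathcal U$, whatever update is applied.
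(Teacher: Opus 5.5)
Your proposal is correct and follows the paper's proof essentially step by step. The integral mean-value formula along the segment from $(\hat x_{k-2},\hat u_{k-2})$ to $(\hat x_{k-1},\hat u_{k-1})$ gives $A_{k-1},B_{k-1}$, and adding and subtracting $H_{k-1}\Delta\hat u_{k-1}$ together with $U_{\Delta,k-1}h_{k-1}=H_{k-1}\Delta\hat u_{k-1}$ yields the regression model; the residual bounds then come from compactness of $\mathcal U$ and contractivity. The differences are cosmetic: you bound $\omega_{p}$ by $2\sup_k\|H_k\|_F$ instead of via the Lipschitz bound on $\nabla h$, and $\Delta\hat x_k$ through boundedness of $\hat x_k$ instead of the recursion $\|\Delta\hat x_k\|\le\rho_f\|\Delta\hat x_{k-1}\|+G_u^f\|\Delta\hat u_{k-1}\|$. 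The index issue you flag does not arise: the two evaluations of $f$ differ in their arguments by exactly $(\Delta\hat x_{k-1},\Delta\hat u_{k-1})$, so no relabeling is needed.
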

See Appendix~\ref{pf:LPV_structure} for the proof.

\begin{lem}\label{lem:corr_bounded}
Let Assumptions~\ref{ass:system}--\ref{ass:cost} hold.
\begin{enumerate}
\item[(i)] (H-SFO-RLS) Let Assumption~\ref{ass:excitation} additionally hold, and let $H_{\mathrm{RLS},k}$ be generated by \cite[eq.~(9)]{picallo2022adaptive} applied to \eqref{eq:reg_process}--\eqref{eq:reg_measurement}, with tuning matrices satisfying $\underline\sigma I\preceq\Sigma_p,\Sigma_m\preceq\bar\sigma I$ for some $\bar\sigma\ge\underline\sigma>0$. Then there exists $C_{\mathrm{RLS}}>0$ such that
$\sup_{k\ge 0}\|H_{\mathrm{RLS},k}-\nabla h(\hat u_k)\|\le C_{\mathrm{RLS}}$.
\item[(ii)] (H-SFO-ESC) Let $H(z)$ and $L(z)$ in \eqref{eq:gradient_esc} be stable. There exists $C_{\mathrm{ESC}}>0$ such that $\sup_{k\ge 0}\|d_{\mathrm{ESC},k}\|\le C_{\mathrm{ESC}}$.
\end{enumerate}
\end{lem}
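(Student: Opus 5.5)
For part (i), the plan is to reduce the claim to the boundedness result for the RLS tracking error in \cite{picallo2022adaptive}, using Lemma~\ref{lem:lpv_bounded_Sigma} to verify its hypotheses. That lemma shows the sensitivity $h_k=\mathrm{vec}(\nabla h(\hat u_k))$ obeys a random-walk-type process model \eqref{eq:reg_process} and a measurement model \eqref{eq:reg_measurement}. Both the process noise $\omega_{p,k}$ and the measurement noise $\omega_{m,k}$ are uniformly bounded. The regressor is $U_{\Delta,k-1}=\Delta\hat u_{k-1}^\top\otimes I_m$.

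The first step is to check that the regressors are uniformly bounded and persistently exciting. Boundedness is immediate, since $\hat u_k\in\mathcal U$ with $\mathcal U$ compact gives $\|\Delta\hat u_k\|\le 2\,\mathrm{diam}$-type bounds. For excitation, the rank condition \eqref{eq:PE_design} lifts to the Kronecker regressors:
\[
\sum_{i=0}^{T}U_{\Delta,t+i}^\top U_{\Delta,t+i}=\Big(\sum_{i=0}^{T}\Delta\hat u_{t+i}\Delta\hat u_{t+i}^\top\Big)\otimes I_m .
\]
This matrix is positive definite because the $\Delta\hat u_{t+i}$ span $\mathbb R^p$. I need a \emph{uniform} lower bound on its smallest eigenvalue, and I expect this to be the main obstacle: a pure rank condition does not by itself give one. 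I would first read Assumption~\ref{ass:excitation} in the uniform sense intended in \cite[Sec.~3.2]{picallo2022adaptive}, i.e.\ with a fixed $\beta>0$ bounding the Gram matrix from below. If that reading is not available, I would argue it from the boundedness of the probing signal as discussed there.

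With these hypotheses in place, the RLS/Kalman-type recursion \cite[eq.~(9)]{picallo2022adaptive} has the following properties when $\underline\sigma I\preceq\Sigma_p,\Sigma_m\preceq\bar\sigma I$. Its covariance $P_k$ stays uniformly bounded above and below, by the standard uniform observability/controllability argument. The error dynamics $\tilde h_k=h_k-\mathrm{vec}(H_{\mathrm{RLS},k})$ are then exponentially stable and driven by bounded inputs $\omega_{p,k},\omega_{m,k}$. Hence $\sup_k\|\tilde h_k\|<\infty$, which gives $C_{\mathrm{RLS}}$ after converting the vectorization norm to the matrix norm.

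For part (ii), boundedness follows by propagating bounds through the signal chain. Since $\mathcal U$ is compact and $\nabla_u J$ is bounded by $G_u^J$, the first term of $d_{\mathrm{ESC},k}$ is at most $G_u^J$. For the second term I need $J_k^\delta$ to be bounded. The argument $\hat u_k$ lies in the compact set $\mathcal U$, and so does $u_k^\delta$. The state argument $f(\hat x_k,u_k^\delta)$ is bounded because contractivity gives
\[
\|\hat x_{k+1}-\phi(\hat u_k)\|\le\rho_f\|\hat x_k-\phi(\hat u_k)\|,
\]
while $\phi$ is continuous on the compact set $\mathcal U$, so $\hat x_k$ stays bounded. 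The point $f(\hat x_k,u_k^\delta)$ then lies within $\rho_f\|\hat x_k-\phi(u_k^\delta)\|$ of $\phi(u_k^\delta)$, hence also in a bounded set. Since $J$ is continuous, and in fact Lipschitz in $y$ by the bound $G_y^J$, $J_k^\delta$ is uniformly bounded. A stable LTI filter maps bounded inputs to bounded outputs, with gain equal to the $\ell_1$ norm of its impulse response plus initial-condition terms. Applying this to $H(z)$, then multiplying by $s_k$ with $\|s_k\|\le\sqrt p$, then applying $L(z)$ gives a uniform bound, and $C_{\mathrm{ESC}}$ collects all these constants.
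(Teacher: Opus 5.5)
Your proposal matches the paper's proof essentially step by step. For (i), the paper likewise forms the error recursion $\tilde e_k=(I-K_kU_{k-1})\tilde e_{k-1}+\omega_{p,k-1}-K_k\omega_{m,k-1}$, obtains uniform complete observability and controllability from the windowed Gram bound and $\Sigma_p\succeq\underline\sigma I$, and uses these to get exponential stability and a bounded gain $K_k$. It then sums the bounded residuals of Lemma~\ref{lem:lpv_bounded_Sigma}. For (ii), it bounds $\hat x_k$ via Lemma~\ref{lem:aux_bounds}, bounds $J_k^\delta$ by continuity on a compact set, and bounds the filter outputs by the $\ell_1$ norms of the impulse responses plus zero-input terms.

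On the uniform lower eigenvalue bound you flag, the paper takes your primary reading: it simply asserts $\beta_1 I\preceq\sum_i U_{\Delta,i}^\top U_{\Delta,i}$, citing \cite{picallo2022adaptive}. Your fallback of deriving it from the boundedness of $\omega_{u,k}$ would not work. Boundedness only gives the upper constant $\beta_2$; a positive $\beta_1$ needs a uniform richness condition on the closed-loop increments themselves.
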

See Appendix~\ref{pf:uniform_bound} for the proof.
Part (i) follows by the argument in \cite[proof of Proposition~1]{picallo2022adaptive}: condition \eqref{eq:PE_design} and $\Sigma_p\succ0$ make the recursion uniformly completely observable and controllable, implying exponentially stable homogeneous error dynamics \cite[Ch.~7]{jazwinski2007stochastic}. The bounded residuals of Lemma~\ref{lem:lpv_bounded_Sigma} then imply a uniformly bounded estimation error. Part (ii) follows since Assumption~\ref{ass:system}(i) and the compactness of $\mathcal{U}$ imply that $(\hat x_k)_k$ is bounded, hence $J_k^\delta$ is bounded by continuity of $f$, $g$ and $J$. Together with the bounded cost gradient of Assumption~\ref{ass:cost}(i), the bounded sinusoidal signal $s_k$, and the stability of $H(z)$ and $L(z)$, this implies that $d_{\mathrm{ESC},k}$ is uniformly bounded.

\subsection{Main convergence result for H-SFO}
We now combine the model-based and correction components to characterize the behavior of H-SFO. Compared with the SFO analysis in \cite{huang2025sequential}, the convergence proof must explicitly account for the additional error introduced by the correction term and the probing signal.
\begin{thm}\label{thm:hyb}
Let Assumptions~\ref{ass:system}–\ref{ass:cost} hold, and let the combination weights $(\lambda_k)_{k\in\mathbb{N}}$ satisfy \eqref{eq:alpha_design}. Choose $\alpha>0$ in \eqref{eq:u_update} so that the spectral condition in Lemma~\ref{lem:ideal} is satisfied. For
H-SFO-RLS, let Assumption \ref{ass:excitation} additionally hold, and for H-SFO-ESC, set $\bar\omega = 0$. Then the control input sequence $(\hat{u}_k)_{k\in\mathbb{N}}$ generated by the H-SFO algorithm (either H-SFO-RLS or H-SFO-ESC) satisfies
\begin{equation*}
\limsup_{k\to\infty} \|\hat u_k - \bar u^\ast\|\ \le\ \frac{\alpha\, G_y^JC_h^{\mathrm{lin}}\, L_h (\alpha \bar G + \bar\omega) + (1-\rho_f)\bar\omega}{(1-\rho_f)(1-\rho(M))},
\end{equation*}
where $\bar G:=G_u^J+G_y^JG_u^f/(1-\rho_f)\ge \|d_{\mathrm{SFO},k}\|$. 
\end{thm}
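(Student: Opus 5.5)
We will track the pair $(\|\hat u_k-\bar u^\ast\|,\ \hat e_k)$, with $\hat e_k=\|\hat x_k-\phi(\hat u_k)\|$ as in \eqref{eq:linerr}. The goal is a linear recursion of the form
\begin{equation*}
\begin{bmatrix}\|\hat u_{k+1}-\bar u^\ast\|\\ \hat e_{k+1}\end{bmatrix}\le M\begin{bmatrix}\|\hat u_k-\bar u^\ast\|\\ \hat e_k\end{bmatrix}+\begin{bmatrix}c_k\\ b_k\end{bmatrix},
\end{equation*}
holding componentwise, where $M$ is the matrix of Lemma~\ref{lem:ideal}. The perturbations $c_k,b_k$ will split into a constant part, which produces the stated neighborhood, and a part proportional to $\lambda_k$, which is summable and therefore vanishes in the $\limsup$.

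\textbf{Input component.} Write $\hat d_k=d_k^{\mathrm{id}}+(d_{\mathrm{SFO},k}-d_k^{\mathrm{id}})+\lambda_k(d_{\mathrm{corr},k}-d_{\mathrm{SFO},k})$. Here $d_k^{\mathrm{id}}$ is the ideal gradient in \eqref{eq:idealFO}, evaluated at $(\hat u_k,\hat y_k)$ with the exact sensitivity $\nabla h(\hat u_k)$. Since $\bar u^\ast=\mathrm{proj}_{\mathcal U}(\bar u^\ast-\alpha\nabla \Phi(\bar u^\ast))$ and the projection is nonexpansive, the argument behind Lemma~\ref{lem:ideal} gives the first row of $M$ for the ideal part. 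Strong monotonicity and Lipschitz continuity yield the contraction $\sqrt{1-2\alpha\mu_J+\alpha^2L_{J,u}^2}$. Replacing $\hat y_k$ by $h(\hat u_k)$ produces the coupling term $\alpha C_2\hat e_k$, and the remaining terms are absorbed into $\alpha C_1$. The linearization error is bounded using \eqref{eq:linerr}: $\|d_{\mathrm{SFO},k}-d_k^{\mathrm{id}}\|\le G_y^JC_h^{\mathrm{lin}}\hat e_k$. The correction term is bounded by $\lambda_k\,\Gamma$, where $\Gamma$ is a uniform constant. For H-SFO-RLS we take $\Gamma\le G_y^J(C_{\mathrm{RLS}}+C_h^{\mathrm{lin}}\sup_k\hat e_k)$. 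For H-SFO-ESC we take $\Gamma\le C_{\mathrm{ESC}}+\bar G$. Both constants come from Lemma~\ref{lem:corr_bounded}, and $\hat e_k$ is bounded because $\mathcal U$ is compact and $f$ is contractive. Finally, the probing signal contributes at most $\bar\omega$.

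\textbf{State component.} Because $\phi(\hat u_{k+1})=f(\phi(\hat u_{k+1}),\hat u_{k+1})$, we have
$\hat e_{k+1}\le \|f(\hat x_k,\hat u_k)-f(\phi(\hat u_k),\hat u_k)\|+\|\phi(\hat u_k)-\phi(\hat u_{k+1})\|\le \rho_f\hat e_k+L_h\|\hat u_{k+1}-\hat u_k\|$.
By nonexpansiveness, $\|\hat u_{k+1}-\hat u_k\|\le\alpha\|\hat d_k\|+\bar\omega\le \alpha\bar G+\bar\omega+\alpha\lambda_k\Gamma$. Here we use $\|d_{\mathrm{SFO},k}\|\le\bar G$, which follows from $\|(I-\nabla_xf)^{-1}\nabla_uf\|\le G_u^f/(1-\rho_f)$. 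This gives $b_k=L_h(\alpha\bar G+\bar\omega)+O(\lambda_k)$. I expect the exact constants in the theorem to come out of a slightly different bookkeeping of the first row, in which the coupling from $\hat e_k$ to the input is separated from $M$. Concretely, one can solve the scalar $\hat e$-recursion first. Its limsup is at most $L_h(\alpha\bar G+\bar\omega)/(1-\rho_f)$. Multiplying by $\alpha G_y^JC_h^{\mathrm{lin}}$ and adding $\bar\omega$ gives the input perturbation
$\bar c=\bigl(\alpha G_y^JC_h^{\mathrm{lin}}L_h(\alpha\bar G+\bar\omega)+(1-\rho_f)\bar\omega\bigr)/(1-\rho_f)$.

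\textbf{Closing the argument and main obstacle.} Iterating $v_{k+1}\le Mv_k+w_k$ with $M\ge0$ gives $v_k\le M^kv_0+\sum_j M^{k-1-j}w_j$. Because $\rho(M)<1$, the summable $\lambda_j$-terms yield a sum that tends to zero. This is a standard convolution of an $\ell^1$ sequence with a geometrically decaying kernel, and it is the only place where \eqref{eq:alpha_design} is used. The constant terms contribute $(I-M)^{-1}w$ in the limit. The main obstacle is extracting the clean scalar bound $\bar c/(1-\rho(M))$ rather than an entry of $(I-M)^{-1}$. I would first try a weighted norm $\|v\|_\pi=\pi^\top v$, with $\pi>0$ the left Perron eigenvector of $M$, normalized so that its first entry dominates appropriately. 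Then $\|v_{k+1}\|_\pi\le\rho(M)\|v_k\|_\pi+\pi^\top w_k$. If this normalization does not reproduce the exact constant, I would accept a bound of the same form up to that normalization.
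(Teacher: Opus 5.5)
There is a genuine gap, and it comes from your choice of the second coordinate. With $\hat e_k=\|\hat x_k-\phi(\hat u_k)\|$, your input row is not the first row of $M$. Splitting $d(\hat u_k,\hat y_k)-\nabla\Phi(\bar u^\ast)$, with $\Phi(u):=J(u,h(u))$, through $\nabla\Phi(\hat u_k)$ does give the coupling $\alpha C_2\hat e_k$. But the remaining difference $\nabla\Phi(\hat u_k)-\nabla\Phi(\bar u^\ast)$ also contains $\nabla_uJ(\bar u^\ast,h(\hat u_k))-\nabla_uJ(\bar u^\ast,h(\bar u^\ast))$ and the $y$-part of the $\nabla_yJ$ difference. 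These cost an additional $\alpha L_hC_2\|\hat u_k-\bar u^\ast\|$, which is not contained in $\alpha C_1$.

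The decisive problem is that $\hat e_k$ is not a deviation from the reference. It is driven by the constant input $L_h(\alpha\bar G+\bar\omega)$ and is only known to satisfy $\limsup\hat e_k\le\bar e$. The $C_2$-coupling therefore feeds a non-vanishing input into the $u$-component. If you treat it as forcing, it adds $\alpha C_2\bar e$ to your numerator. If you keep $\hat e_k$ in the state and use the Perron weighted norm, it appears as $(w_2/w_1)L_h(\alpha\bar G+\bar\omega)$. Your $\bar c$ silently keeps only $\alpha G_y^JC_h^{\mathrm{lin}}\bar e+\bar\omega$. So the stated constant does not follow, and no normalization of $\pi$ can repair this.

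The missing idea is to take as second coordinate an output error that vanishes along the reference trajectory. Then the $C_2$-coupling stays \emph{inside} $M$, and only the linearization error is exogenous. The paper compares with the ideal FO trajectory of \eqref{eq:idealFO} and stacks $\epsilon_k=[\|u_k-\hat u_k\|,\ \|y_k-\hat y_k\|]^\top$:
\begin{itemize}
\item the split \eqref{eq:step1split} gives exactly the first row $[M_{11},\ \alpha C_2]$;
\item contractivity and $\|\nabla_uf\|\le G_u^f$ give the second row $[G_u^f,\ \rho_f]$;
\item the forcing $\alpha G_y^JC_h^{\mathrm{lin}}\hat e_k+\alpha\lambda_kC_\Delta+\bar\omega$ enters the first component only.
\end{itemize}
Here $\hat e_k$ is handled by the separate scalar recursion, exactly as you did.

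Your direct comparison with $\bar u^\ast$ also works if you replace $\hat e_k$ by $\|\hat x_k-\phi(\bar u^\ast)\|$, because $(\bar u^\ast,\phi(\bar u^\ast))$ is a fixed point of the ideal iteration by first-order optimality. That variant even bypasses the convergence statement of Lemma~\ref{lem:ideal}, using only its spectral condition.

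Your ``main obstacle'' then disappears. Let $w>0$ be the left Perron vector and $V_k=w^\top\epsilon_k$. Then $V_{k+1}\le\rho(M)V_k+w_1(\alpha G_y^JC_h^{\mathrm{lin}}\hat e_k+\alpha\lambda_kC_\Delta+\bar\omega)$. Since $w_1\|u_k-\hat u_k\|\le V_k$, the factor $w_1$ cancels for any normalization. This gives exactly $(\alpha G_y^JC_h^{\mathrm{lin}}\bar e+\bar\omega)/(1-\rho(M))$.
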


See Appendix~\ref{pf:convergence} for the proof. Theorem~\ref{thm:hyb} shows that H-SFO converges to a neighborhood of the optimal steady state. The correction term acts as a summable perturbation, which may influence the transient behavior but does not contribute an additional term to the asymptotic bound. For H-SFO-RLS, however, the persistent probing signal contributes the terms depending on $\bar\omega$ above, which vanish for H-SFO-ESC since $\bar\omega=0$. Since the bound is derived using worst-case estimates for the tracking error and the deviation from the ideal FO trajectory, it may be conservative, especially when $\rho_f$ or $\rho(M)$ is close to one.	
	
\section{Numerical Results}\label{sec:simulation}

This section evaluates the proposed H-SFO methods through numerical simulations on a wind farm power maximization problem. We first compare both H-SFO-RLS and H-SFO-ESC with the baseline SFO algorithm to assess the transient performance, then study the influence of the combination weight decay rate, and finally compare them with an alternative hybrid feedback optimization method.

\subsection{Simulation setup}

All simulations use a wind farm consisting of $9$ NREL 5MW reference turbines arranged in a $3\times 3$ square grid with inter-turbine spacing. The farm dynamics are simulated using WFSim \cite{boersma2018control}, a medium-fidelity dynamic flow model based on two-dimensional Navier-Stokes equations (see \cite{boersma2018control} for detailed model description). The computational domain spans $2518.8\,\mathrm{m} \times 1558.4\,\mathrm{m}$ and is discretized on a $50\times 25$ staggered grid. Steady uniform flow conditions are assumed with $u_b = 8\,\mathrm{m/s}$ and $v_b = 0\,\mathrm{m/s}$.

The control inputs for turbine $i$ are the disk-based thrust coefficient $C_{T,i}'$ and yaw angle $\gamma_i$. Let $y = h(C_T', \gamma) = [P_1,\ldots, P_N]^{\top}$ denote the steady-state input-output mapping of WFSim, where $P_i$ represents the power generated by turbine $i$. Since WFSim is nonlinear and high-dimensional, it is not practical to compute an explicit expression for this mapping. Following \cite{huang2025sequential}, the steady-state power maximization problem is formulated as
\begin{equation}\label{eq:wf_problem}
\begin{aligned}
\min_{C_T', \gamma} \  & J(C_T', \gamma, y) = \left(\frac{\mathbf{1}_N^{\top} y - P^{\mathrm{ref}}}{P^{\mathrm{ref}}}\right)^2 + \tfrac{\mu}{2}\|C_T'\|^2 + \tfrac{\mu_\gamma}{2}\|\gamma\|^2 \\
\text{s.t.} \  & y = h(C_T', \gamma),\ \ C_T' \in [0.4, 3.6]^N,\ \ \gamma \in [-30^{\circ}, 30^{\circ}]^N, 
\end{aligned}
\end{equation}
where we set $P^{\mathrm{ref}} = 18\,\mathrm{MW}$, and choose regularization parameters $\mu = 2.8\times 10^{-4}$ and $\mu_\gamma = 2\times 10^{-5}$. All controllers are initialized from the greedy (non-cooperative) steady state with $C_{T,i}' = 2$ and $\gamma_i = 0^{\circ}$ for all turbines.

The model-based sensitivity is computed at each iteration by linearizing the WFSim dynamics around the current operating point \cite{huang2025sequential}, and the RLS algorithm estimates $\nabla_{C_T'} h$ and $\nabla_\gamma h$ online. In H-SFO-RLS, $\omega_{u,k}$ is a truncated Gaussian probing signal with standard deviation $10^{-3}$ for $C_T'$ and $10^{-2}$ degrees for $\gamma$. In H-SFO-ESC, the dither amplitudes are $a_{C_T'}=0.05$ and $a_\gamma=2^\circ$, with frequencies in $[0.005,0.03]$ rad/s and filter bandwidths $0.002$ rad/s.

\subsection{Comparison with SFO}

Fig.~\ref{fig:hfo_vs_sfo} compares the total farm power trajectories for the proposed H-SFO-RLS, H-SFO-ESC, the SFO method, and the greedy baseline. Both hybrid methods and SFO achieve substantially higher power than the greedy controller, with hybrid methods attaining a moderately higher steady-state power than SFO. Indeed, Theorem~\ref{thm:hyb} only guarantees convergence to a neighborhood of the optimum, so different trajectories may settle at different operating points. However, their transient behaviors differ significantly. SFO increases power slowly in the early phase and then exhibits a large overshoot before settling to its steady state. This may be related to the fact that, during the initial transient, the measured flow field is far from the steady state of the current input, so that $H_{\mathrm{lin},k}$ may be a poor approximation of $\nabla h$, as shown in \eqref{eq:linerr}.

In contrast, both H-SFO-RLS and H-SFO-ESC show improved transient performance. A possible explanation is that the exploration underlying the RLS and ESC corrections provides more effective optimization directions during the early phase, when the operating point is far from the cooperative optimum. By combining these corrections with the model-based SFO gradient through the weight $\lambda_k$, the algorithm benefits from both the structural information of the linearized model and the adaptive corrections. This results in a faster overall power increase and smoother convergence.

\begin{figure}[htbp]
\centering
\includegraphics[width=\linewidth]{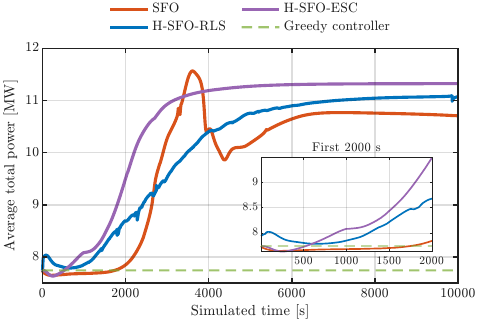}
\caption{Total farm power comparison: H-SFO versus baseline SFO and greedy controller.}
\label{fig:hfo_vs_sfo}
\end{figure}

\subsection{Influence of weight decay rate}

 Fig.~\ref{fig:weight} investigates the influence of the combination weights in~\eqref{eq:alpha_design} on H-SFO-RLS (top row) and H-SFO-ESC (bottom row). We study two weight schedules: the asymptotic decay $\lambda_k = 1/(1 + (k/200)^p)$ with exponent $p \in \{1.5, 2, 2.5\}$, and the finite-time switching $\lambda_k = \max\{1-k/T,0\}^2$ with transition time $T\in\{1000, 2000, 3000\}$. 

For H-SFO-RLS, the algorithm achieves immediate power improvement from the initial iterations, maintaining performance above the greedy baseline throughout. Both schedules exhibit trade-offs between convergence speed and smoothness. With asymptotic decay (left panel), slower decay ($p=1.5$) maintains a high weight on the RLS correction longer, leading to faster initial power gain but also more oscillations in the transient. Faster decay ($p=2.5$) reduces RLS influence more quickly, resulting in a behavior closer to pure SFO with less transient improvement. The intermediate choice $p=2$ offers a reasonable balance, achieving a fast transient while maintaining a smooth convergence. With finite-time switching (right panel), longer transition time ($T=3000$) yields the fastest initial power increase and smooth convergence, while shorter transition time ($T=1000$) exhibits slower initial response. The intermediate choice $T=2000$ shows behavior between these extremes.

For H-SFO-ESC, the power increases more smoothly. With the asymptotic decay schedule (left panel), faster decay ($p=2.5$) leads to quicker convergence, while slower decay ($p=1.5$) maintains ESC correction longer, yielding slightly faster initial gains but extending the overall convergence period. With the finite-time switching schedule (right panel), shorter transition time ($T=1000$) results in faster convergence to steady state. Unlike H-SFO-RLS, all H-SFO-ESC trajectories remain smooth, since H-SFO-ESC does not include the additive probing signal.

Both weight schedules successfully improve early transient performance compared to pure SFO for both methods, though the specific transient characteristics differ. The weight parameters should be tuned according to the specific application requirements.

\begin{figure}[htbp]
\centering
\includegraphics[width=\linewidth]{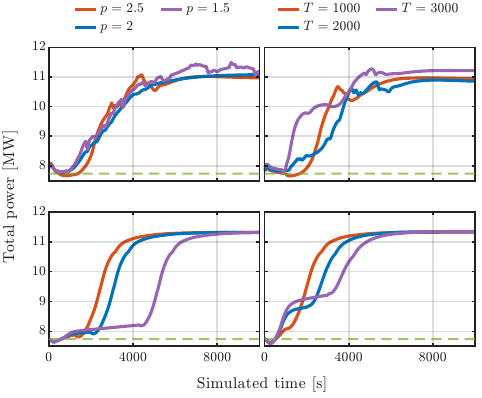}
\caption{Influence of the combination weights on H-SFO-RLS (top) and H-SFO-ESC (bottom): asymptotic decay with exponent $p$ (left) and finite-time switching with transition time $T$ (right).}
\label{fig:weight}
\end{figure}

\subsection{Comparison with alternative hybrid FO method}

We also compare H-SFO with the hybrid framework proposed in \cite{he2024online}, which combines the RLS estimate with a model-free, zeroth-order gradient estimate. For this comparison, we present trajectories of the control inputs rather than total power output because the zeroth-order gradient estimates induce extreme power fluctuations in the fixed step-size case, making power-based visualization uninformative. Figure~\ref{fig:he_case} shows the control trajectories of upstream turbines for the method of \cite{he2024online} under two step-size strategies. 

\begin{figure}[htbp]
\centering
\includegraphics[width=\linewidth]{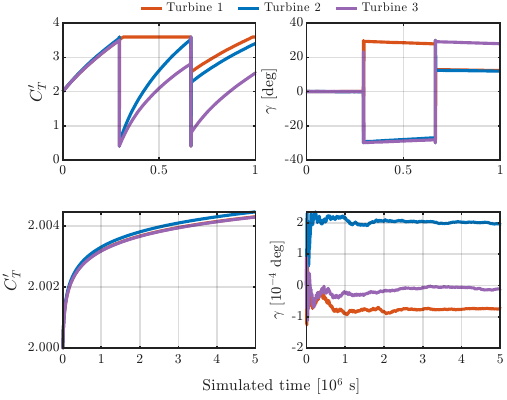}
\caption{Control trajectories of upstream turbines for the zeroth-order gray-box method of \cite{he2024online} with fixed step size (top) and diminishing step size (bottom).}
\label{fig:he_case}
\end{figure}

The top row gives results with small fixed step-sizes $\alpha_{\mathrm{C_T'}} = 10^{-4}$ and $\alpha_\gamma = 10^{-3}$. Both the disk-based thrust coefficients $C_T'$ and yaw angles $\gamma$ exhibit severe fluctuations and frequently hit constraint boundaries. This instability occurs because the model-free method produces high-variance gradient estimates when facing the complex nonlinear wake dynamics.  The bottom row uses a diminishing step-size $\alpha_k \propto 1/k$ to stabilize the controller. In this case, the trajectories are more regular, but convergence becomes extremely slow. Even after $5\times 10^6$ iterations, the control inputs remain close to their initial greedy values, yielding almost the same power production as the greedy controller.

These results demonstrate that, for the considered wind farm control problem in~\eqref{eq:wf_problem}, this hybrid FO method with fixed step-sizes leads to instability, while with diminishing step-sizes it converges too slowly to be practical. In contrast, the model-based approach in H-SFO appears to provide more reliable optimization directions, enabling both stability and faster convergence.

\section{Conclusion}\label{section:conclusion}

We introduced a hybrid extension of sequential feedback optimization (H-SFO) that augments the model-based scheme with a vanishing correction term while retaining convergence to a neighborhood of the optimal steady state. The wind farm case study demonstrates that H-SFO has the potential to accelerate the increase in total power production, while the combination weights provide a simple way to balance the correction against convergence smoothness.

Future work will focus on validating the approach in high-fidelity models (such as SOWFA) and wind tunnel testing. An extension to dynamic models with time-varying parameters is also of interest.

\useRomanappendicesfalse
\appendices
\section{Preliminary bounds}\label{pf:prelim}
Throughout the Appendix we use the setting of Section~\ref{sec:conv}, i.e., $g(x,u)=x$, so that $\hat y_k=\hat x_k$ and $h=\phi$. Moreover, $\|\cdot\|$ denotes the Euclidean norm for vectors and the induced $2$-norm for matrices, and $\|\cdot\|_F$ denotes the Frobenius norm. The following facts are used in the proofs.

First, uniform contractivity implies $\|\nabla_x f(x,u)\|\le\rho_f$ for all $(x,u)$. Hence, by the Neumann series, $\|(I-\nabla_xf)^{-1}\|\le 1/(1-\rho_f)$ and
\begin{equation}\label{eq:Hlin_bound}
\|H_{\mathrm{lin}}(x,u)\|\le \frac{G_u^f}{1-\rho_f},\qquad \forall (x,u)\in\mathbb{R}^n\times\mathcal U .
\end{equation}
Second, since $\nabla h(u)=H_{\mathrm{lin}}(\phi(u),u)$ and $\phi$ is $L_h$-Lipschitz, Lemma~\ref{lem:lipschitz} gives
\begin{equation}\label{eq:gradh_lip}
\|\nabla h(u_1)-\nabla h(u_2)\|\le C_h^{\mathrm{lin}}(1+L_h)\|u_1-u_2\| ,
\end{equation}
for all $u_1,u_2\in\mathcal U$, and $\|\nabla h(u)\|\le L_h$. Third, let $D_{\mathcal U}:=\max_{u,u'\in\mathcal U}\|u-u'\|$ denote the diameter of $\mathcal U$, which is finite since $\mathcal U$ is compact. Then $\|\Delta\hat u_k\|\le D_{\mathcal U}$ for all $k$.

\begin{lem}\label{lem:aux_bounds}
Let Assumption~\ref{ass:system} hold. Then the steady-state error defined in \eqref{eq:linerr} satisfies 
\begin{align}
\hat e_{k+1}&\le \rho_f\,\hat e_k+L_h\|\hat u_{k+1}-\hat u_k\| ,\label{eq:track_rec}\\
\hat e_k&\le \rho_f^{k}\hat e_0+\frac{L_hD_{\mathcal U}}{1-\rho_f},\label{eq:track_crude}\\
\|\Delta\hat x_{k}\|&\le \rho_f^{k-1}\|\Delta\hat x_{1}\|+\frac{G_u^fD_{\mathcal U}}{1-\rho_f},\quad k\ge1 .\label{eq:dx_bound}
\end{align}
In particular, $(\hat x_k)_k$ is bounded and $\|\Delta\hat x_k\|\le\bar\Delta_x:=\|\Delta\hat x_1\|+G_u^fD_{\mathcal U}/(1-\rho_f)$ for all $k\ge1$.
\end{lem}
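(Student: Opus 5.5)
The plan is to work directly with the definition $\hat e_k=\|\hat x_k-\phi(\hat u_k)\|$ and with the state update $\hat x_{k+1}=f(\hat x_k,\hat u_k)$ of Algorithm~\ref{alg:HSFO}, using that $\phi(\hat u_k)$ is a fixed point of $f(\cdot,\hat u_k)$. For \eqref{eq:track_rec}, I will insert $\phi(\hat u_k)$ and apply the triangle inequality:
\[
\hat e_{k+1}\le \|f(\hat x_k,\hat u_k)-f(\phi(\hat u_k),\hat u_k)\|+\|\phi(\hat u_k)-\phi(\hat u_{k+1})\|.
\]
The first term is at most $\rho_f\hat e_k$ by Assumption~\ref{ass:system}(i). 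The second term is at most $L_h\|\hat u_{k+1}-\hat u_k\|$, because $h=\phi$ in the present setting and $h$ is $L_h$-Lipschitz by Assumption~\ref{ass:system}(iii). Both iterates lie in $\mathcal U$, since the projection in \eqref{eq:u_update} keeps $\hat u_{k+1}\in\mathcal U$ (and $\hat u_0\in\mathcal U$), so the Lipschitz bound applies.

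For \eqref{eq:track_crude}, I will bound $\|\hat u_{k+1}-\hat u_k\|\le D_{\mathcal U}$. Unrolling \eqref{eq:track_rec} then gives
\[
\hat e_k\le\rho_f^k\hat e_0+L_hD_{\mathcal U}\sum_{j=0}^{k-1}\rho_f^j,
\]
and the geometric series is bounded by $1/(1-\rho_f)$.

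For \eqref{eq:dx_bound}, the main step is a mean-value representation. For $k\ge1$ I write
\[
\Delta\hat x_{k+1}=f(\hat x_k,\hat u_k)-f(\hat x_{k-1},\hat u_{k-1}),
\]
and split it as $[f(\hat x_k,\hat u_k)-f(\hat x_{k-1},\hat u_k)]+[f(\hat x_{k-1},\hat u_k)-f(\hat x_{k-1},\hat u_{k-1})]$. The first bracket is bounded by $\rho_f\|\Delta\hat x_k\|$ by contractivity. The second bracket equals $\int_0^1\nabla_u f(\hat x_{k-1},\hat u_{k-1}+s\Delta\hat u_k)\,ds\,\Delta\hat u_k$, which is valid because $f$ is $C^1$. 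Its norm is therefore at most $G_u^f\|\Delta\hat u_k\|\le G_u^fD_{\mathcal U}$ by Assumption~\ref{ass:system}(ii). This gives $\|\Delta\hat x_{k+1}\|\le\rho_f\|\Delta\hat x_k\|+G_u^fD_{\mathcal U}$. Unrolling from $k=1$ yields $\rho_f^{k-1}\|\Delta\hat x_1\|$ plus a geometric sum bounded by $G_u^fD_{\mathcal U}/(1-\rho_f)$.

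The uniform bound $\bar\Delta_x$ follows immediately since $\rho_f^{k-1}\le1$. Boundedness of $(\hat x_k)_k$ follows from $\|\hat x_k\|\le\hat e_k+\|\phi(\hat u_k)\|$. Here $\hat e_k$ is bounded by \eqref{eq:track_crude}. The term $\|\phi(\hat u_k)\|$ is bounded because $\phi$ is continuous on the compact set $\mathcal U$; alternatively, $\|\phi(u)\|\le\|\phi(u_0)\|+L_hD_{\mathcal U}$. No step is a serious obstacle. The only points requiring care are that the Lipschitz property of $h$ is used only on $\mathcal U$, which the projection guarantees, and that the input-gradient bound $G_u^f$ is applied along a segment, which is legitimate since the uniform bound holds on all of $\mathbb R^n\times\mathbb R^p$.
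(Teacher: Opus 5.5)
Your proposal is correct and follows the paper's proof essentially step for step. Both arguments insert the fixed point $\phi(\hat u_k)$ and use contractivity together with the $L_h$-Lipschitz continuity of $\phi=h$, unroll with $\|\hat u_{k+1}-\hat u_k\|\le D_{\mathcal U}$, unroll $\|\Delta\hat x_{k+1}\|\le\rho_f\|\Delta\hat x_k\|+G_u^f\|\Delta\hat u_k\|$, and obtain boundedness of $(\hat x_k)_k$ from compactness of $\phi(\mathcal U)$. Your explicit split of $f(\hat x_k,\hat u_k)-f(\hat x_{k-1},\hat u_{k-1})$ into a contraction term and an input mean-value term just fills in the detail that the paper compresses into a single appeal to Assumption~\ref{ass:system}(i)--(ii).
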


\begin{proof}
Since $\phi(\hat u_k)=f(\phi(\hat u_k),\hat u_k)$, Assumption~\ref{ass:system}(i) and (iii) give
\begin{align*}
\hat e_{k+1}&=\|f(\hat x_k,\hat u_k)-\phi(\hat u_{k+1})\|\\
&\le \|f(\hat x_k,\hat u_k)-f(\phi(\hat u_k),\hat u_k)\|+\|\phi(\hat u_k)-\phi(\hat u_{k+1})\| ,
\end{align*}
which is \eqref{eq:track_rec}. Then \eqref{eq:track_crude} follows by unrolling and using $\|\hat u_{k+1}-\hat u_k\|\le D_{\mathcal U}$. For \eqref{eq:dx_bound}, Assumption~\ref{ass:system}(i)--(ii) give
\begin{align*}
\|\Delta\hat x_k\|&=\|f(\hat x_{k-1},\hat u_{k-1})-f(\hat x_{k-2},\hat u_{k-2})\|\\
&\le\rho_f\|\Delta\hat x_{k-1}\|+G_u^f\|\Delta\hat u_{k-1}\| ,
\end{align*}
and \eqref{eq:dx_bound} follows by unrolling. Finally, $(\hat x_k)_k$ is bounded by \eqref{eq:track_crude}, since $\phi(\mathcal U)$ is compact.
\end{proof}

The following elementary result is used to handle the vanishing weights.

\begin{lem}\label{lem:conv}
Let $\rho\in(0,1)$ and let $(a_t)_{t\ge0}$ be nonnegative.
\begin{enumerate}
\item[(a)] If $\sum_{t\ge0}a_t<\infty$, then $\sum_{t=0}^k\rho^{k-t}a_t\to0$ as $k\to\infty$.
\item[(b)] If there exists a constant $\bar a$ such that $\limsup_{t\to\infty}a_t\le\bar a$, then $\limsup_{k\to\infty}\sum_{t=0}^k\rho^{k-t}a_t\le \bar a/(1-\rho)$.
\end{enumerate}
\end{lem}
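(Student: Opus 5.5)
Both parts reduce to the scalar recursion $a_{k+1}\le\rho a_k + b_k$ with $a_k\ge0$, $\rho\in(0,1)$. Unrolling gives $a_k\le\rho^k a_0+\sum_{t=0}^{k-1}\rho^{k-1-t}b_t$. The statement concerns exactly such discounted sums $S_k:=\sum_{t=0}^k\rho^{k-t}a_t$, so I will argue directly on $S_k$. The plan is a standard split of the sum at an index $N$, chosen so that the tail $t\ge N$ is controlled by the hypothesis and the head $t<N$ is killed by the geometric factor.

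For (a), fix $\varepsilon>0$ and pick $N$ with $\sum_{t\ge N}a_t\le\varepsilon$, which is possible since the series converges. For $k\ge N$ write $S_k=\sum_{t=0}^{N-1}\rho^{k-t}a_t+\sum_{t=N}^{k}\rho^{k-t}a_t$. The second sum is at most $\sum_{t\ge N}a_t\le\varepsilon$ because $\rho^{k-t}\le1$. The first is at most $\rho^{k-N+1}\sum_{t<N}a_t$, which tends to $0$ as $k\to\infty$ with $N$ fixed. Hence $\limsup_k S_k\le\varepsilon$ for every $\varepsilon$, and since $S_k\ge0$, $S_k\to0$.

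For (b), fix $\varepsilon>0$ and choose $N$ with $a_t\le\bar a+\varepsilon$ for all $t\ge N$, using the definition of $\limsup$. Split $S_k$ the same way. The head is again bounded by $\rho^{k-N+1}\sum_{t<N}a_t\to0$. The tail is bounded by $(\bar a+\varepsilon)\sum_{j=0}^{k-N}\rho^j\le(\bar a+\varepsilon)/(1-\rho)$. Taking $\limsup_k$ and then letting $\varepsilon\downarrow0$ gives the claim.

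There is no real obstacle; the only point needing care is that the head sum is finite and fixed once $N$ is chosen. This holds because $N$ depends only on $\varepsilon$, not on $k$, so the factor $\rho^{k-N+1}$ drives it to zero.
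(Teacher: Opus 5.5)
Your proposal is correct and follows essentially the same route as the paper: split the discounted sum at an $\varepsilon$-dependent index $N$, bound the head by a vanishing geometric factor times a fixed finite sum, and bound the tail by the series tail in (a) or by $(\bar a+\varepsilon)/(1-\rho)$ in (b). Your opening remark about the recursion $a_{k+1}\le\rho a_k+b_k$ is unnecessary and reuses the symbol $a_k$ in a confusing way, but it plays no role in the argument.
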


\begin{proof}
(a) Fix $\varepsilon>0$ and choose $N$ with $\sum_{t>N}a_t<\varepsilon$. For $k>N$,
\[
\sum_{t=0}^k\rho^{k-t}a_t\le \rho^{k-N}\sum_{t=0}^{N}a_t+\sum_{t>N}a_t<\rho^{k-N}\sum_{t=0}^{N}a_t+\varepsilon ,
\]
and the first term tends to zero as $k\to\infty$. Since $\varepsilon$ is arbitrary, the claim follows. (b) Fix $\varepsilon>0$ and choose $N$ with $a_t\le\bar a+\varepsilon$ for all $t>N$. Then, for $k>N$, $\sum_{t=0}^k\rho^{k-t}a_t\le\rho^{k-N}\sum_{t=0}^{N}a_t+(\bar a+\varepsilon)/(1-\rho)$, and letting $k\to\infty$ and then $\varepsilon\to0$ gives the claim.
\end{proof}

\section{Proof of Lemma~\ref{lem:lpv_bounded_Sigma}}\label{pf:LPV_structure}
For $k\ge 2$, the plant model in \eqref{plant_model} gives
\[
\Delta \hat{x}_k = f(\hat{x}_{k-1},\hat{u}_{k-1})-f(\hat{x}_{k-2},\hat{u}_{k-2}).
\]
By applying the integral mean-value formula along $z(\tau):=(\hat{x}_{k-2}+\tau \Delta \hat{x}_{k-1},\,\hat{u}_{k-2}+\tau \Delta \hat{u}_{k-1})$ with $\tau\in [0,1]$, we obtain
\begin{equation}\label{eq:dx_mvt}
\Delta \hat{x}_k = A_{k-1}\Delta \hat{x}_{k-1} + B_{k-1}\Delta \hat{u}_{k-1},
\end{equation}
with
\[
A_{k-1}=\int_0^1 \nabla_x f(z(\tau))\,d\tau,\qquad
B_{k-1}=\int_0^1 \nabla_u f(z(\tau))\,d\tau ,
\]
so that $\|A_{k-1}\|\le\rho_f$ and $\|B_{k-1}\|\le G_u^f$ by Assumption~\ref{ass:system}. Adding and subtracting $H_{k-1}\Delta \hat{u}_{k-1}$ gives
\[
\Delta \hat{x}_k = H_{k-1}\Delta \hat u_{k-1} + \omega_{m,k-1},
\]
with $\omega_{m,k-1}$ defined in \eqref{eq:omegam_def}. Since $\Delta \hat y_k=\Delta \hat{x}_k$ and $H_{k-1}\Delta \hat{u}_{k-1} = U_{\Delta,k-1}\,h_{k-1}$, \eqref{eq:reg_measurement} holds. By the definition of $\omega_{p,k-1}$, the evolution of $h_k$ satisfies \eqref{eq:reg_process}.

It remains to bound the two residuals. Using $\|H_{k-1}\|=\|\nabla h(\hat u_{k-1})\|\le L_h$, $\|\Delta\hat u_{k-1}\|\le D_{\mathcal U}$ and the bound $\|\Delta\hat x_{k-1}\|\le\bar\Delta_x$ of Lemma~\ref{lem:aux_bounds},
\begin{equation}\label{eq:wm_bound}
\|\omega_{m,k-1}\|\le \rho_f\bar\Delta_x+(G_u^f+L_h)D_{\mathcal U}=:\bar\omega_m .
\end{equation}
Moreover, using $\|\mathrm{vec}(X)\|=\|X\|_F\le\sqrt{p}\,\|X\|$ for $X\in\mathbb{R}^{m\times p}$ together with \eqref{eq:gradh_lip},
\begin{align}
\|\omega_{p,k-1}\|&=\|\nabla h(\hat u_k)-\nabla h(\hat u_{k-1})\|_F\notag\\
&\le \sqrt{p}\,C_h^{\mathrm{lin}}(1+L_h)D_{\mathcal U}=:\bar\omega_p .\label{eq:wp_bound}
\end{align}
Both bounds are finite and independent of $k$, which completes the proof. \hfill $\blacksquare$

\section{Proof of Lemma~\ref{lem:corr_bounded}}\label{pf:uniform_bound}
\emph{Part (i).} Write $U_k:=U_{\Delta,k}$ and $\hat h_k:=\mathrm{vec}(H_{\mathrm{RLS},k})$. The recursion of \cite[eq.~(9)]{picallo2022adaptive} applied to \eqref{eq:reg_process}--\eqref{eq:reg_measurement} reads
\begin{align}
K_k&=\Sigma_{k-1}U_{k-1}^\top\big(\Sigma_m+U_{k-1}\Sigma_{k-1}U_{k-1}^\top\big)^{-1},\notag\\
\hat h_k&=\hat h_{k-1}+K_k\big(\Delta\hat y_k-U_{k-1}\hat h_{k-1}\big),\label{eq:rls_rec}\\
\Sigma_k&=\big(I-K_kU_{k-1}\big)\Sigma_{k-1}+\Sigma_p,\qquad \Sigma_0\succ0,\notag
\end{align}
where $\Sigma_k\succ0$ is the matrix updated by the estimator recursion, and the tuning matrices satisfy $\underline\sigma I\preceq\Sigma_p\preceq\bar\sigma I$ and $\underline\sigma I\preceq\Sigma_m\preceq\bar\sigma I$. Defining the vectorized estimation error $\tilde e_k:=h_k-\hat h_k$ and substituting \eqref{eq:reg_process}--\eqref{eq:reg_measurement} into \eqref{eq:rls_rec},
\begin{equation}\label{eq:err_dyn}
\tilde e_k=(I-K_kU_{k-1})\tilde e_{k-1}+\omega_{p,k-1}-K_k\,\omega_{m,k-1}.
\end{equation}

Since $\hat u_k\in\mathcal U$ for all $k$ and $\mathcal U$ is compact, the regressors are uniformly bounded, $\|U_{k-1}\|=\|\Delta\hat u_{k-1}\|\le D_{\mathcal U}$. Moreover, following \cite[proof of Proposition~1]{picallo2022adaptive}, the persistence-of-excitation condition \eqref{eq:PE_design} implies that, for a sufficiently large window $T$, there exist $\beta_1,\beta_2>0$ such that
\[
\beta_1 I\ \preceq\ \sum_{i=k}^{k+T}U_{\Delta,i}^\top U_{\Delta,i}\ \preceq\ \beta_2 I,\qquad\forall k\ge0 .
\]
Defining the windowed information matrix $W_I(k,T):=\sum_{i=k}^{k+T}U_{\Delta,i}^\top\Sigma_m^{-1}U_{\Delta,i}$ and using $\underline\sigma I\preceq\Sigma_m\preceq\bar\sigma I$,
\[
\frac{\beta_1}{\bar\sigma}I\ \preceq\ W_I(k,T)\ \preceq\ \frac{\beta_2}{\underline\sigma}I,\qquad\forall k\ge0 ,
\]
so that the regression model \eqref{eq:reg_process}--\eqref{eq:reg_measurement} is uniformly completely observable \cite[p.~35]{sastry2011adaptive}, while $\Sigma_p\succeq\underline\sigma I$ makes it uniformly completely controllable. Consequently, the homogeneous part of \eqref{eq:err_dyn} is exponentially stable \cite[Ch.~7]{jazwinski2007stochastic}, i.e., its state transition matrices satisfy
\begin{equation}\label{eq:Phi_bound}
\|\Phi(k,t)\|\le D\rho_R^{\,k-t},\qquad \forall k\ge t\ge 0,
\end{equation}
for some $D\ge1$ and $\rho_R\in(0,1)$. The same bounds also imply that $(\Sigma_k)_k$ is uniformly bounded, i.e., $\Sigma_k\preceq\sigma_{\max}I$ for some $\sigma_{\max}>0$, and therefore
\[
\|K_k\|\le \frac{\sigma_{\max}D_{\mathcal U}}{\underline\sigma}=:\bar K,\qquad\forall k\ge0 .
\]

By unrolling \eqref{eq:err_dyn} we obtain
\[
\tilde e_k=\Phi(k,0)\tilde e_0+\sum_{t=1}^{k}\Phi(k,t)\big(\omega_{p,t-1}-K_t\,\omega_{m,t-1}\big),
\]
and hence, using \eqref{eq:Phi_bound}, \eqref{eq:wm_bound}, \eqref{eq:wp_bound} and the geometric-series bound $\sum_{t=1}^k\rho_R^{\,k-t}\le 1/(1-\rho_R)$,
\begin{equation}\label{eq:etilde_bound}
\|\tilde e_k\|\ \le\ \underbrace{D\rho_R^{\,k}\|\tilde e_0\|}_{\text{initial error}}+\frac{D(\bar\omega_p+\bar K\bar\omega_m)}{1-\rho_R}\ \le\ \tilde C ,
\end{equation}
with $\tilde C:=D\|\tilde e_0\|+D(\bar\omega_p+\bar K\bar\omega_m)/(1-\rho_R)$. Moreover, since $\|H_{\mathrm{RLS},k}-\nabla h(\hat u_k)\|\le\|H_{\mathrm{RLS},k}-\nabla h(\hat u_k)\|_F=\|\tilde e_k\|$, the bound \eqref{eq:etilde_bound} gives $\sup_{k\ge0}\|H_{\mathrm{RLS},k}-\nabla h(\hat u_k)\|\le C_{\mathrm{RLS}}:=\tilde C$. 

\emph{Part (ii).} By Lemma~\ref{lem:aux_bounds}, $(\hat x_k)_k$ is bounded, and $u_k^\delta=\mathrm{proj}_{\mathcal U}(\hat u_k+\delta_k)\in\mathcal U$. Since $f$, $g$ and $J$ are continuous and $\mathcal U$ is compact, we obtain $|J_k^\delta|\le\bar J$ for some $\bar J<\infty$. The filters $H(z)$ and $L(z)$ are design choices of the ESC scheme, and we choose them BIBO stable, so that their impulse responses are absolutely summable. Let $\|H\|_{\ell_1}$ and $\|L\|_{\ell_1}$ denote the (finite) $\ell_1$-norms of the impulse responses of the stable filters $H(z)$ and $L(z)$, and let $\eta_H,\eta_L$ bound their zero-input responses. Since $\|s_k\|_\infty\le1$, each component of $L(z)[s_kH(z)[J_k^\delta]]$ is bounded by $\|L\|_{\ell_1}(\|H\|_{\ell_1}\bar J+\eta_H)+\eta_L$. Together with $\|\nabla_uJ\|\le G_u^J$ from Assumption~\ref{ass:cost}(i), \eqref{eq:gradient_esc} gives
\[
\|d_{\mathrm{ESC},k}\|\le G_u^J+\sqrt{p}\,\big(\|L\|_{\ell_1}(\|H\|_{\ell_1}\bar J+\eta_H)+\eta_L\big)=:C_{\mathrm{ESC}} ,
\]
for all $k\ge0$. Note that the extremum-tracking property of \cite{choi2002extremum} is not used here, since it requires additional conditions on the dither amplitude and on the time-scale separation. \hfill $\blacksquare$

\section{Proof of Theorem~\ref{thm:hyb}}\label{pf:convergence}
Let $(u_k,y_k)$ denote the ideal FO trajectory generated by \eqref{eq:idealFO} and $(\hat u_k,\hat y_k)$ the H-SFO trajectory, and let
\[
d(u,y):= \nabla_u J(u,y) + \nabla h(u)^{\top}\nabla_y J(u,y),
\]
so that $d_k=d(u_k,y_k)$.

\emph{Step 1 (comparison of the input sequences).} By the non-expansiveness of $\mathrm{proj}_{\mathcal{U}}$, the ideal FO update in \eqref{eq:idealFO} and the H-SFO update in \eqref{eq:u_update},
\begin{align}
\|u_{k+1}-\hat u_{k+1}\|
&\le \big\|(u_k-\hat u_k) - \alpha(d_k-\hat d_k)-\omega_{u,k}\big\|\notag\\
&\le \big\|(u_k-\hat u_k) - \alpha\big(d_k-d(\hat u_k,\hat y_k)\big)\big\|\notag\\
&\quad + \alpha\,\underbrace{\big\|d(\hat u_k,\hat y_k)-\hat d_k\big\|}_{\Delta_k}+\bar\omega ,
\label{eq:triangle_split}
\end{align}
where $\bar\omega=0$ for H-SFO-ESC. To bound the first term of \eqref{eq:triangle_split}, we split it as
\begin{align}
&\big\|(u_k-\hat u_k) - \alpha\big(d(u_k,y_k)-d(\hat u_k,\hat y_k)\big)\big\|\notag\\
&\le \big\|(u_k-\hat u_k)-\alpha\big(\nabla_uJ(u_k,y_k)-\nabla_uJ(\hat u_k,y_k)\big)\big\|\notag\\
&\quad+\alpha\big\|\nabla_uJ(\hat u_k,y_k)-\nabla_uJ(\hat u_k,\hat y_k)\big\|\notag\\
&\quad+\alpha\big\|\nabla h(u_k)^\top\nabla_yJ(u_k,y_k)-\nabla h(\hat u_k)^\top\nabla_yJ(\hat u_k,\hat y_k)\big\| .\label{eq:step1split}
\end{align}
By Assumption~\ref{ass:cost}, the first term of \eqref{eq:step1split} is bounded by $\sqrt{1-2\alpha\mu_J+\alpha^2L_{J,u}^2}\,\|u_k-\hat u_k\|$, using $\mu_J$-strong monotonicity and the $L_{J,u}$-Lipschitz continuity of $\nabla_uJ$, and the second one by $\alpha L_{J,u}\|y_k-\hat y_k\|$. For the third term, adding and subtracting $\nabla h(u_k)^\top\nabla_yJ(\hat u_k,\hat y_k)$ and using $\|\nabla h\|\le L_h$, \eqref{eq:gradh_lip} and Assumption~\ref{ass:cost}(i), we obtain the bound $\alpha\big(L_hL_{J,y}+G_y^JC_h^{\mathrm{lin}}(1+L_h)\big)\|u_k-\hat u_k\|+\alpha L_hL_{J,y}\|y_k-\hat y_k\|$. Collecting the three terms, with $C_1$ and $C_2$ as in Lemma~\ref{lem:ideal}, we get
\begin{equation}\label{eq:uk_ineq_Delta}
\|u_{k+1}-\hat u_{k+1}\|\le M_{11}\|u_k - \hat u_k\| + \alpha C_2\|y_k - \hat y_k\| + \alpha \Delta_k+\bar\omega ,
\end{equation}
where $M_{11}$ denotes the $(1,1)$ entry of the matrix $M$ in Lemma~\ref{lem:ideal}.

\emph{Step 2 (bound on $\Delta_k$).} Recall that the hybrid gradient is
\begin{equation}\label{eq:hatg_split}
\hat d_k = (1-\lambda_k)\,d_{\mathrm{SFO},k} + \lambda_k\,d_{\mathrm{corr},k},
\end{equation}
with $d_{\mathrm{corr},k}=d_{\mathrm{RLS},k}$ for H-SFO-RLS and $d_{\mathrm{corr},k}=d_{\mathrm{ESC},k}$ for H-SFO-ESC. By substituting \eqref{eq:hatg_split} into the definition of $\Delta_k$ and using $\lambda_k\in[0,1]$,
\begin{align}
\Delta_k
&\le (1-\lambda_k)\underbrace{\big\|d(\hat u_k,\hat y_k) - d_{\mathrm{SFO},k}\big\|}_{\Delta_k^{\mathrm{SFO}}}\notag\\
&\quad+ \lambda_k\underbrace{\big\|d(\hat u_k,\hat y_k) - d_{\mathrm{corr},k}\big\|}_{\Delta_k^{\mathrm{corr}}}\notag\\
&\le \Delta_k^{\mathrm{SFO}}+\lambda_k\Delta_k^{\mathrm{corr}} .\label{eq:Delta_split_var}
\end{align}
From \eqref{eq:sfo_gradient}, Assumption~\ref{ass:cost}(i) and \eqref{eq:linerr},
\begin{align}
\Delta_k^{\mathrm{SFO}}
&= \big\|\big(\nabla h(\hat u_k) - H_{\mathrm{lin},k}\big)^{\top}\nabla_y J(\hat u_k,\hat y_k)\big\|\notag\\
&\le G_y^J\big\|\nabla h(\hat u_k) - H_{\mathrm{lin},k}\big\|\le G_y^JC_h^{\mathrm{lin}}\hat e_k . \label{eq:Delta_SFO_step1}
\end{align}
For H-SFO-RLS, by the definition of $d_{\mathrm{RLS},k}$ in \eqref{eq:gradient_rls} and Lemma~\ref{lem:corr_bounded}(i),
\begin{align}
\Delta_k^{\mathrm{corr}}
&= \big\|(\nabla h(\hat u_k)-H_{\mathrm{RLS},k})^{\top}\nabla_y J(\hat u_k,\hat y_k)\big\|\notag\\
&\le G_y^J\|\nabla h(\hat u_k)-H_{\mathrm{RLS},k}\|\le G_y^JC_{\mathrm{RLS}} .\label{eq:Dcorr_RLS_1}
\end{align}
For H-SFO-ESC, by Assumptions~\ref{ass:system}(iii) and \ref{ass:cost}(i), $\sup_{k\ge0}\|d(\hat u_k,\hat y_k)\|\le G_u^J+L_hG_y^J$, so that Lemma~\ref{lem:corr_bounded}(ii) gives
\begin{equation}\label{eq:Dcorr_ESC_final}
\Delta_k^{\mathrm{corr}}\le \|d(\hat u_k,\hat y_k)\|+\|d_{\mathrm{ESC},k}\|\le G_u^J + L_hG_y^J + C_{\mathrm{ESC}} .
\end{equation}
In both cases,
\begin{equation}\label{eq:Delta_unified}
\Delta_k \le G_y^J C_h^{\mathrm{lin}} \hat e_k + \lambda_k C_{\Delta},
\end{equation}
where $C_\Delta$ denotes the corresponding constant in \eqref{eq:Dcorr_RLS_1} or \eqref{eq:Dcorr_ESC_final}, i.e., $C_\Delta:=G_y^JC_{\mathrm{RLS}}$ for H-SFO-RLS and $C_\Delta:=G_u^J+L_hG_y^J+C_{\mathrm{ESC}}$ for H-SFO-ESC.

\emph{Step 3 (tracking mismatch of the hybrid trajectory).} By \eqref{eq:u_update}, the non-expansiveness of $\mathrm{proj}_{\mathcal U}$ and $\hat u_k=\mathrm{proj}_{\mathcal U}(\hat u_k)$, we have $\|\hat u_{k+1}-\hat u_k\|\le\alpha\|\hat d_k\|+\bar\omega$. By \eqref{eq:Hlin_bound} and Assumption~\ref{ass:cost}(i), $\|d_{\mathrm{SFO},k}\|\le \bar G$, while Lemma~\ref{lem:corr_bounded} gives $\|d_{\mathrm{corr},k}\|\le C_G$, with $C_G:=G_u^J+G_y^J(L_h+C_{\mathrm{RLS}})$ for H-SFO-RLS and $C_G:=C_{\mathrm{ESC}}$ for H-SFO-ESC. Hence, by \eqref{eq:hatg_split},
\[
\|\hat u_{k+1}-\hat u_k\|\le\alpha\big(\bar G+\lambda_k C_G\big)+\bar\omega ,
\]
and \eqref{eq:track_rec} yields, after unrolling,
\[
\hat e_{k}\le\rho_f^{k}\hat e_0+L_h(\alpha \bar G+\bar\omega)\sum_{t=0}^{k-1}\rho_f^{\,k-1-t}+\alpha L_hC_G\sum_{t=0}^{k-1}\rho_f^{\,k-1-t}\lambda_t .
\]
The first term vanishes, the second is bounded by $L_h(\alpha\bar G+\bar\omega)/(1-\rho_f)$, and the third tends to zero by Lemma~\ref{lem:conv}(a), since $\sum_t\lambda_t<\infty$ by \eqref{eq:alpha_design}. Therefore,
\begin{equation}\label{eq:ebar}
\limsup_{k\to\infty}\hat e_k\ \le\ \bar e:=\frac{L_h(\alpha \bar G+\bar\omega)}{1-\rho_f},
\end{equation}
and $(\hat e_k)_k$ is bounded.

\emph{Step 4 (comparison system).} The output-error dynamics of the plant satisfy
\begin{equation}\label{F:y}
\|y_{k+1}-\hat y_{k+1}\|\le G_u^f\|u_k-\hat u_k\|+\rho_f\|y_k-\hat y_k\| .
\end{equation}
By stacking $\epsilon_k:=[\|u_k-\hat u_k\|,\ \|y_k-\hat y_k\|]^\top$ and combining \eqref{eq:uk_ineq_Delta}, \eqref{eq:Delta_unified} and \eqref{F:y}, we obtain the componentwise inequality
\begin{equation}\label{F:stack}
\epsilon_{k+1}\le M \epsilon_k + \begin{bmatrix}\alpha G_y^JC_h^{\mathrm{lin}} \hat e_k + \alpha\lambda_kC_\Delta+\bar\omega\\0\end{bmatrix},
\end{equation}
with $M$ defined in Lemma~\ref{lem:ideal}. Since $M$ is nonnegative and, by $\alpha C_2>0$ and $G_u^f>0$, irreducible, the Perron--Frobenius theorem implies the existence of a positive vector $w = [w_1,w_2]^{\top}$ such that $M^\top w= \rho(M) w$. Multiplying \eqref{F:stack} by $w^\top$, which preserves the inequality since $w>0$, and defining $V_k := w^{\top}\epsilon_k$, we have
\[
V_{k+1} \le \rho(M) V_k + w_1\big(\alpha G_y^JC_h^{\mathrm{lin}}\hat e_k + \alpha\lambda_k C_\Delta+\bar\omega\big).
\]
By unrolling this recursion,
\begin{align}
V_{k+1} &\le (\rho(M))^{k+1} V_0 + w_1\alpha G_y^JC_h^{\mathrm{lin}}\sum_{t=0}^k (\rho(M))^{k-t} \hat e_t\notag\\
&\quad + w_1\alpha C_\Delta\sum_{t=0}^k (\rho(M))^{k-t} \lambda_t+w_1\bar\omega\sum_{t=0}^k(\rho(M))^{k-t} .\label{eq:unroll}
\end{align}

\emph{Step 5 (limit).} We now apply Lemma~\ref{lem:conv} with $\rho=\rho(M)$ to the three sums in \eqref{eq:unroll}. By \eqref{eq:ebar} and Lemma~\ref{lem:conv}(b), the first sum is bounded by $\bar e/(1-\rho(M))$ in the limit. By \eqref{eq:alpha_design} and Lemma~\ref{lem:conv}(a), the second sum tends to zero. The third sum is bounded by $1/(1-\rho(M))$, and $(\rho(M))^{k+1}V_0\to0$. It follows that
\[
\limsup_{k\to\infty} V_k \le \frac{w_1\big(\alpha G_y^JC_h^{\mathrm{lin}} \bar e+\bar\omega\big)}{1-\rho(M)} .
\]
Since $w_1\|u_k - \hat{u}_k\|\le V_k$, we obtain
\begin{align}
\limsup_{k\to\infty} \|u_k-\hat u_k\|
&\le \frac{\alpha\, G_y^JC_h^{\mathrm{lin}}\bar e+\bar\omega}{1-\rho(M)}\notag\\
&=\frac{\alpha\, G_y^JC_h^{\mathrm{lin}}\, L_h (\alpha \bar G + \bar\omega) + (1-\rho_f)\bar\omega}{(1-\rho_f)(1-\rho(M))}.
\end{align}
Combining with the convergence of the ideal FO scheme to $(\bar u^\ast,\bar y^\ast)$ in Lemma~\ref{lem:ideal} concludes the proof. \hfill $\blacksquare$

\bibliographystyle{IEEEtran}
\bibliography{cdcconf.bib}	
	
\end{document}